    \pretocmd{\NAT@citexnum}{\@ifnum{\NAT@ctype>\z@}{\let\NAT@hyper@\relax}{}}{}{}
\Crefname{ALC@unique}{Line}{Lines} % For referring to line nums in alg
\crefname{ineq}{inequality}{inequalities}
\crefname{claim}{claim}{claims}           % Support for claim env
\crefname{defn}{definition}{definitions}  % Support for defn env
\crefname{obs}{Observation}{Observations}
\crefname{lem}{Lemma}{Lemmas}
\crefname{cor}{Corollary}{Corollaries}
\theoremstyle{plain}
\newtheorem{claim}{Claim}[section]
\newtheorem{defn}{Definition}[section]
\newtheorem{obs}[defn]{Observation}
\newcommand{\eps}{\ensuremath{\varepsilon}}
\newcommand{\V}{\ensuremath{\cal V}}
\newcommand{\Z}{\ensuremath{\cal Z}}
\newcommand{\Op}{\ensuremath{\cal O}}
\newcommand{\Lp}{\ensuremath{\cal L}}
\begin{document}
% Author macros::begin %%%%%%%%%%%%%%%%%%%%%%%%%%%%%%%%%%%%%%%%%%%%%%%%
\title{On Variants of $k$-means Clustering\footnote{This material is based upon work supported by
the National Science Foundation under Grant CCF-1318996}}
% \titlerunning{An $O(1)$ Approximation for Orthogonal Order Preserving Layout
% Adjustment} %optional, in case that the title is too long; the running title should fit into the top page column

\author{Sayan Bandyapadhyay\thanks{sayan-bandyapadhyay@uiowa.edu}} 
\author{Kasturi Varadarajan\thanks{kasturi-varadarajan@uiowa.edu}}
\affil{
  Department of Computer Science\\
  University of Iowa, Iowa City, USA\\
  }
  
\authorrunning{S.\,Bandyapadhyay and K.\,Varadarajan} %mandatory. First: Use abbreviated first/middle names. Second (only in severe cases): Use first author plus 'et. al.'

\Copyright{Sayan Bandyapadhyay and Kasturi Varadarajan}%mandatory, please use full first names. LIPIcs license is "CC-BY";  http://creativecommons.org/licenses/by/3.0/

\subjclass{I.3.5 Computational Geometry and Object Modeling 
% -- please refer to \url{http://www.acm.org/about/class/ccs98-html}
}% mandatory: Please choose ACM 1998 classifications from http://www.acm.org/about/class/ccs98-html . E.g., cite as "F.1.1 Models of Computation". 
\keywords{$k$-means, Facility location, Local search, Geometric approximation}% mandatory: Please provide 1-5 keywords
% Author macros::end %%%%%%%%%%%%%%%%%%%%%%%%%%%%%%%%%%%%%%%%%%%%%%%%%
\maketitle
% Do not delete - Reset footnotes to numbers
%\setfnsymbol{stars}

\begin{abstract}
\textit{Clustering problems} often arise in the fields like data mining, machine learning etc. to group a collection of objects into similar groups with respect to a similarity (or dissimilarity) measure. Among the clustering problems, specifically \textit{$k$-means} clustering has got much attention from the researchers. Despite the fact that $k$-means is a very well studied problem its status in the plane is still an open problem. In particular, it is unknown whether it admits a PTAS in the plane. The best known approximation bound in polynomial time is $9+\eps$.

In this paper, we consider the following variant of $k$-means. Given a set $C$ of points in $\mathcal{R}^d$ and a real $f > 0$, find a finite set $F$ of points in $\mathcal{R}^d$ that minimizes the quantity $f*|F|+\sum_{p\in C} \min_{q \in F} {||p-q||}^2$. For any fixed dimension $d$, we design a local search PTAS for this problem. We also give a ``bi-criterion'' local search algorithm for $k$-means which uses $(1+\eps)k$ centers and yields a solution whose cost is at most $(1+\eps)$ times the cost of an optimal $k$-means solution. The algorithm runs in polynomial time for any fixed dimension.  

The contribution of this paper is two fold. On the one hand, we are being able to handle the square of distances in an elegant manner, which yields near optimal approximation bound. This leads us towards a better understanding of the $k$-means problem. On the other hand, our analysis of local search might also be useful for other geometric problems. This is important considering that very little is known about the local search method for geometric approximation.
\end{abstract} 
% \end{titlepage}

\section{Introduction}\label{sec:intro}
Given a set of items/objects, the \textit{clustering} problem is to group them into similar groups with respect to a similarity (or dissimilarity) measure. Due to its fundamental nature clustering has several applications in the fields like data mining, machine learning, pattern recognition, image processing and so on \cite{BroderGMZ97,DeerwesterDLFH90,duda1973pattern,FaloutsosBFHNPE94,FayyadPSU96,JainDM00,JainMF99,KaufmanR90,SwainB91}. Often the objects to cluster are mapped to a high dimensional metric space and the distance between two objects represents the similarity (or dissimilarity) between the objects. Then the goal is to minimize (or maximize) certain objective function that depends on the distances between the objects. Among the different variants of the clustering problem, specifically the \textit{$k$-means} problem has got much attention. In $k$-means clustering, given a set $P$ of $n$ points in $\mathcal{R}^d$ and an integer $k > 0$, the goal is to find a set $K$ of $k$ centers in $\mathcal{R}^d$, such that the quantity $$cost(K)=\sum_{p\in P} \min_{q \in K} ||p-q||^2$$ is minimized. $k$-means is known to be $\mathcal{NP}$-hard even in the plane \cite{AloiseDHP09,MahajanNV12}. Hence, there has been a lot of work to approximate the $k$-means objective function in polynomial time. There are even $(1+\eps)$-factor approximation algorithms whose time complexity depend linearly on $n$ \cite{Har-PeledK07,Har-PeledM04,KumarSS05,DMatousek00}. Unfortunately, the time complexity of these algorithms depend exponentially on $k$ and hence they are not suitable in practice when $k$ is sufficiently large. For arbitrary $k$ and $d$, the best known approximation factor is $9+\eps$ based on a local search technique \cite{KanungoMNPSW04}. On the other hand, for fixed dimension $d$, there is a bi-criteria approximation algorithm for $k$-means that uses $\beta k$ ($ > k$) centers and achieves an approximation factor $\alpha(\beta)$ $(<9+\eps) $ that depends on $\beta$ \cite{DMakarychevMSW15}. Moreover, $\alpha(\beta)$ decreases rapidly with $\beta$ (for instance $\alpha(2) <2.59,\alpha(3) <1.4$). Recently, Awasthi~{\em et. al} \cite{AwasthiCKS15} have attempted to study the inapproximabilty of $k$-means. They have shown that $k$-means is $APX$-hard in sufficiently high ($\Omega(\log n)$) dimensions. However, as they have pointed out in their paper, the status of this problem in constant dimensions is still not resolved. See also \cite{AwasthiBS10,OstrovskyRSS12} for some related work.

% On a different note, Awasthi~{\em et. al} \cite{} show that given the condition that the $(k−1)$-means optimal is more expensive than the $k$-means optimal by a factor $1+\alpha$ for some constant $\alpha > 0$,
% one can obtain a PTAS.

An insight about the difficulty of $k$-means can be found by comparing it to the \textit{$k$-median} clustering. $k$-median is similar to $k$-means except the goal is to minimize the sum of distances, instead of the sum of squares of distances. Arora~{\em et. al} \cite{Arora} presented a PTAS for $k$-median in plane based on a novel technique due to Arora \cite{Arora98}. Kolliopoulos and Rao \cite{KolliopoulosR07} improved the time complexity significantly to $O(\rho n\log n\log k)$, where $\rho = exp[O((1+\log 1/\eps)/\eps)^{d-1}]$ and $\eps$ is the constant of the PTAS. 
% See also \cite{Har-PeledK07,Har-PeledM04}. 
From the results on $k$-median one might conclude, that a reason behind the sophistication of $k$-means is its objective function. One reason that squares of distances are harder to handle compare to the distances is they do not follow triangle inequality in general. However, the important question in this context is, is the objective function of $k$-means by itself good enough to make this problem harder? We are interested to address this question in this paper. To set up the stage we consider another famous problem, which is called the facility location problem.

Facility location is similar to the $k$-median problem, where we are given a set of points (clients) in $\mathcal{R}^d$. The plan is to choose another set of points (facilities) in $\mathcal{R}^d$ which ``serve'' the clients. Though there is no global constraint on the number of facilities, for each facility, we need to pay a fixed cost. Here the objective function to minimize is the facility costs plus the sum of the distances from each of the clients to its nearest facility. Facility location has got much attention by the researchers in the fields like operation research, approximation algorithm etc. One can get a PTAS for facility location in the plane using the same technique by Arora~{\em et. al} \cite{Arora} that solves $k$-median. Actually, $k$-median has always been considered harder compare to facility location due to the global constraint on the number of centers as mentioned in \cite{Arora}. Now going back to our original question for $k$-means one can infer that the global constraint in $k$-means might as well play a crucial role. Motivated by this, we define the following variant of facility location.
\\\\\textbf{Sum of Squares Facility Location Problem (SOS-FL).} Given a set $C$ of points (clients) in $\mathcal{R}^d$ and a real $f > 0$, find a finite set $F$ of points (facilities) in $\mathcal{R}^d$, such that the quantity $$cost(F)=f*|F|+\sum_{p\in C} \min_{q \in F} {||p-q||}^2$$ is minimized.

Note that SOS-FL is similar to $k$-means except the global constraint on the number of facilities (or centers) is absent here. In this paper we study the following interesting question.

\begin{description}
 \item Is it possible to get a PTAS for SOS-FL in $\mathcal{R}^d$ for fixed $d$?
\end{description}

We answer this question positively. In particular for any $\eps> 0$, we give a $(1+\eps)$-factor approximation for SOS-FL based on a \textit{local search heuristic}. This result is very interesting, as it also addresses our earlier question regarding $k$-means. To be precise it infers, that it is the joint effect of the global constraint and the objective function that makes $k$-means complicated to deal with.
\\\\\textbf{Local Search.} Local search is a very popular heuristic in combinatorial optimization. But the technique was not much in use for geometric problems until recently. Still we know very little regarding this technique for geometric approximation. Arya~{\em et. al} \cite{AryaGKMMP04} gave a $3+\frac{2}{p}$ factor approximation for $k$-median based on a local search that swaps $p$ facilities, which was later simplified by Gupta and Tangwongsan \cite{abs-0809-2554}. The $9+\eps$ factor approximation for $k$-means, as mentioned before, is based on the approach of Arya~{\em et. al} \cite{AryaGKMMP04}. Mustafa and Ray \cite{MustafaR09} gave a local search PTAS for the discrete hitting set problem over pseudodisks and $r$-admissible regions in the plane. Chan and Har-Peled \cite{ChanH12} designed a local search heuristic for the independent set problem over fat objects, and for pseudodisks in the plane, which yields a PTAS. Recently, Cohen-Addad and Mathieu \cite{Cohen-AddadM15} showed the effectiveness of local search technique for geometric optimization by designing local search algorithms for many geometric problems including $k$-median and facility location. For facility location they achieved a PTAS. For $k$-median their approach yields a $1+\eps$ factor approximation by using at most $(1+\eps)k$ centers. Very recently, Bhattiprolu and Har-Peled \cite{BhattiproluH14} designed a local search PTAS for a geometric hitting set problem over balls in $\mathcal{R}^d$. Their PTAS also works for those infinite set of balls which can be represented implicitly in a specific way mentioned in their paper. Also it is worth it to mention that the bicriteria algorithm for $k$-means \cite{DMakarychevMSW15}, as mentioned before, is based on a local search method.
% \\\\\textbf{$k$-Means.} Given a set $P$ of points in $\mathcal{R}^d$, find a set $K$ of $k$ points in $\mathcal{R}^d$, such that the quantity $$cost(K)=\sum_{p\in P} \min_{q \in K} ||p-q||^2$$ is minimized.
% As far as we are concerned there is no known approximation bound for SOS-FL. SOS-FL can be considered as one of the problems which are broadly classified as facility location problems. Facility location problems are another class of optimization problems that have got much attention by the researchers in the fields like operation research, approximation algorithm etc. The classical facility location problem is similar to SOS-FL except the sum of distances are considered, instead of the sum of squares of distances. 
% All the distances we consider in this paper is euclidean.
\subsection{Our Results and Techniques}
In this work we consider both SOS-FL and $k$-means. The main contribution of this work is that we are being able to handle the square of distances in an elegant way, which yields near optimal approximation bounds. This is in particular very interesting, as it gives a better understanding of the classical $k$-means problem, whose status in the plane has remained open for a long time. We design polynomial time approximation algorithms based on local search technique for both of these problems. Given an $\eps > 0$, the algorithm for SOS-FL yields a $(1+\eps)$-factor approximation. For $k$-means, the algorithm uses at most $(1+\eps)k$ centers and yields a solution whose cost is at most $(1+\eps)$ times the cost of an optimal $k$-means solution.

The algorithm and the analysis for both of the problems are similar. However, in case of $k$-means there are some more subtleties, which arise due to the limitation on the number of centers. In general, both of the algorithms are based on a local search method that allows swapping in and out of constant number of facilities or centers. Like the approaches in \cite{BhattiproluH14,ChanH12,Cohen-AddadM15,MustafaR09} we also use separators to prove the quality of the approximation. To be precise we use the separator from \cite{BhattiproluH14} which is most suitable for our purpose. We note that this separator itself gives a lot of ease in handling the square of distances. The separator is used repeatedly to partition the local and global optimal facilities simultaneously into constant size ``parts''. The rest of the analysis involves assignment of clients corresponding to each ``part'' to the global facilities corresponding to that ``part'' only or to some ``auxilliary'' points. Also one should be careful that a client should not be assigned to a point ``far'' away from it compare to its nearest local and global facility. The choice of the separator plays a crucial role to give a bound on this cost. From a very high level our approach is similar to the approach of Cohen-Addad and Mathieu \cite{Cohen-AddadM15} for clustering problems. But the details of the analysis are significantly different in places. For example, they use the dissection technique from \cite{KolliopoulosR07} as their separator and thus the assignment in their case is completely different and more complicated than ours. In this regard we would like to mention, that the dissection technique from \cite{KolliopoulosR07} or the quadtree based approach of Arora \cite{Arora98} are not flexible enough to handle the square of distances. The local search algorithm for SOS-FL and $k$-means are described in Section \ref{sec:FL} and Section \ref{sec:kmeans}, respectively.

% In particular our result for the SOS-FL implies that the unsophistication in handling $k$-means arises from the combination of the global constraint on the number of clusters and the objective function. The objective function 
\section{PTAS for Sum of Squares Facility Location}\label{sec:FL}
In this section we describe a simple local search algorithm for SOS-FL. We show that the solution returned by this algorithm is within $(1+O(\eps))$-factor of the optimal solution for any $\eps > 0$. Recall that in SOS-FL we are given a set $C$ of points in $\mathcal{R}^d$ and a real $f > 0$. Let $|C|=n$. For a point $p$ and a set $R$ of points, let $d(p,R) = \min_{q\in R} ||p-q||$.
\subsection{The Local Search Algorithm}
% From the work of Matousek in \cite{DMatousek00} it follows that for any $\eps > 0$, there exists a \textit{centroid} set $\mathcal{D} \subset \mathcal{R}^d$ of size $O(n{\eps}^{-d} \log (\frac{1}{\eps}))$, such that there is a subset $F \subseteq \mathcal{D}$ with $cost(F)$ at most $1+\eps$ times the cost of any optimal solution of SOS-FL. Thus a $(1+O(\eps))$ approximation for SOS-FL where the facilities always belong to $\mathcal{D}$, yields a $(1+O(\eps))$ approximation for SOS-FL in $\mathcal{R}^d$. As we want a $(1+O(\eps))$ approximation for SOS-FL, from now onwards we will consider the version of SOS-FL where the facilities always belong to $C \cup \mathcal{D}$. 

Fix an $\eps > 0$. The local search algorithm starts with a solution where one facility is placed at each client (see Algorithm \ref{alg:local}). Note that the cost of this solution is $nf$. Denote by $OPT$ the cost of any optimal solution. As $OPT \geq f$, the initial solution has cost at most $O(n\cdot OPT)$. In each iteration the algorithm looks for local improvement. The algorithm returns the current set of facilities if there is no such improvement. Notice that in line 2, we consider swaps with at most $\frac{c}{{\eps}^d}$ facilities, where $c$ is a constant. Fix $A=F\setminus F_1$, the facilities being swapped out. We are interested in finding a constant sized subset $B=F_1\setminus F$ such that $cost((F\setminus A)\cup B)$ is at most a certain quantity. Since this involves a search for a constant number of points in fixed dimension, it can be executed in polynomial time using standard techniques \cite{InabaKI94}.
% Using a standard argument like in \cite{AryaGKMMP04,ChanH12}, one can show that the algorithm runs in polynomial time.

\begin{algorithm}[]
     \caption{Local Search}
    \label{alg:local}
    \begin{algorithmic}[1]
        \REQUIRE A set of clients $C \subset \mathcal{R}^d$, a constant $\eps > 0$.
        \ENSURE A set of facilities $F$.
	\STATE $F \leftarrow$ the set of facilities with one facility at each client
 	\WHILE {$\exists$ a set $F_1$ s.t. $cost(F_1) < (1-\frac{1}{n})cost(F)$ and $|F_1\setminus F|+|F\setminus F_1|\leq \frac{c}{{\eps}^d}$}
	    \STATE $F \leftarrow F_1$
	\ENDWHILE
        \RETURN $F$
    \end{algorithmic}
\end{algorithm}

\subsection{Analysis of the Local Search Algorithm}
We analyze the local search algorithm using a partitioning scheme based on the separator theorem from \cite{BhattiproluH14}. To our surprise the analysis is simple. The idea is to partition the set of facilities in the local search solution and an optimal solution into parts of size $O(\frac{1}{{\eps}^d})$. Now for each such small part, we assign the clients corresponding to the local facilities of that part, either to the optimal facilities in that part or to the points belong to a special set. This yields a new solution, whose symmetric difference with the local search solution contains $O(\frac{1}{{\eps}^d})$ facilities. Thus using the local optimality criteria the cost of this new solution is not ``small'' compare to the local search solution. Then we combine the new solutions corresponding to the parts to give a bound on the cost of the local search solution. To start with we describe the separator theorem.

\subsubsection{Separator Theorem}
% Consider any set of points $P$. We denote the nearest neighbor voronoi diagram of $P$ by ${\V}_P$. For $p \in P$, let ${\V}_P(p)$ be the voronoi cell of $p$ in ${\V}_P$. 
% 
% \begin{definition}
%  Let $P_1$ and $P_2$ be two subsets of a set $P \subset \mathcal{R}^d$. The sets $P_1$ and $P_2$ are \textit{Voronoi Separated} in $P$ if for all $p_1\in P_1$ and $p_2\in P_2$, ${\V}_P(p_1)\cap {\V}_P(p_2)=\phi$.
% \end{definition}
% 
% \begin{definition}
%  For any two disjoint subsets $P_1$ and $P_2$ of $\mathcal{R}^d$, a set $\Z$ is a \textit{Voronoi Separator} for $P_1$ and $P_2$ if $P_1$ and $P_2$ are Voronoi separated in $P_1\cup P_2\cup \Z$. 
% \end{definition}
A ball $B$ is said to be stabbed by a point $p$ if $p \in B$. The following theorem is due to Bhattiprolu and Har-Peled \cite{BhattiproluH14} which shows the existence of a ``small'' point set (separator), that divides a given set of points into two in a ``balanced'' manner.

\begin{theorem}\label{th:bhatti}
 \cite{BhattiproluH14} (Separator Theorem) Let $X$ be a set of points in $\mathcal{R}^d$, and $\mu >0$ be an integer such that $|X|> \alpha\mu$, where $\alpha$ is a constant. There is an algorithm which can compute, in $O(|X|)$ expected time, a set $\Z$ of $O({\mu}^{1-\frac{1}{d}})$ points and a sphere $\mathcal{S}$ containing $\Theta(\mu)$ points of $X$ inside it, such that for any set $\mathcal{B}$ of balls stabbed by $X$, we have that every ball of $\mathcal{B}$ that intersects $\mathcal{S}$ is stabbed by a point of $\Z$.
 
%  Moreover, no point of $X$ lies on the sphere $\mathcal{S}$.
\end{theorem}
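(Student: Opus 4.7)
The plan is to build the sphere $\mathcal{S}$ and the separator $\Z$ in two stages: first locate a sphere whose immediate neighborhood is sparse in $X$, then define $\Z$ as the union of a surface net on $\mathcal{S}$ together with the few points of $X$ sitting in that neighborhood. The separator property will then reduce to a geometric case analysis driven by the radius of the stabbing ball.

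For the first stage, I would pick a center $c$ (a random point of $X$ works and yields the claimed $O(|X|)$ expected runtime) and let $r$ be the distance from $c$ to its $(2\mu)$-th nearest neighbor in $X$; the existence of this neighbor is exactly what the hypothesis $|X| > \alpha\mu$ guarantees. The ball of radius $r$ about $c$ then contains $\Theta(\mu)$ points of $X$. Next, partition the radial interval $[r,2r]$ into $\Theta(\mu^{1/d})$ concentric sub-annuli of common width $\delta = \Theta(r/\mu^{1/d})$. Since $X$ contributes only $O(\mu)$ points to this annular region, pigeonhole produces a sub-annulus $A$ containing at most $O(\mu^{1-1/d})$ points of $X$; take $\mathcal{S}$ to be the midsphere of $A$, so that the ball enclosed by $\mathcal{S}$ still has $\Theta(\mu)$ points of $X$ inside. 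For the second stage, lay down a $\delta$-net $N$ on $\mathcal{S}$; a standard packing bound gives $|N| = O((r/\delta)^{d-1}) = O(\mu^{1-1/d})$. Setting $\Z = N \cup (X \cap A)$ keeps $|\Z|$ within the required $O(\mu^{1-1/d})$ bound.

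To verify the separator property, fix a ball $B$ of radius $\rho$ that contains some $x \in X$ and intersects $\mathcal{S}$. If $\rho$ is small (roughly $\rho \le \delta/4$), then $x$ and some point of $\mathcal{S}$ both lie in $B$, so $x$ is within $2\rho \le \delta/2$ of the surface of $\mathcal{S}$ and therefore belongs to the sub-annulus $A$, giving $x \in \Z$. If $\rho$ is larger, I would argue that the spherical cap $B \cap \mathcal{S}$ has diameter at least $\delta$, so the $\delta$-net $N$ must place a point inside it. The main obstacle is precisely this second case: a large ball can graze $\mathcal{S}$ nearly tangentially, producing an arbitrarily thin cap, and one has to argue that in such a sliver-like crossing the stabbing point $x$ is forced back into the shell $A$ even though $\rho$ is large. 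Making this dichotomy crisp requires coupling the shell width $\delta$, the net spacing, and the cap geometry so that the two cases cover every possible $(B,x)$ without double-counting, while keeping $|\Z|$ at $O(\mu^{1-1/d})$. The expected $O(|X|)$ runtime then follows by performing the rank computation for $r$, the sub-annulus selection via bucketing, and the net construction in linear time with standard geometric data structures.
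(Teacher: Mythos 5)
First, note that the paper does not prove this statement at all: it is imported verbatim from \cite{BhattiproluH14} and used as a black box (only \cref{cor:separator} is derived from it in the text). So your proposal is measured against the construction in that reference, which does follow the same broad template you describe (a sphere around a dense ball, a sparse shell, a net on the sphere, and $\Z$ = net plus shell points), but differs at exactly the two places where your argument breaks.

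The first gap is the claim that the annulus between radii $r$ and $2r$ around a random $c\in X$ contains only $O(\mu)$ points of $X$. Taking $r$ to be the distance to the $(2\mu)$-th nearest neighbor only bounds the count \emph{inside} $B(c,r)$; the ball $B(c,2r)$ can contain $\Omega(|X|)$ points (e.g.\ a small cluster of $2\mu+1$ points with the remaining points packed at distance between $1.5r$ and $2r$), so the pigeonhole step yields a sub-annulus with $O(|X|/\mu^{1/d})$ points rather than $O(\mu^{1-1/d})$, and the ``$\Theta(\mu)$ points inside $\mathcal{S}$'' guarantee also fails since there is no matching upper bound. The fix in \cite{BhattiproluH14} is to center the construction on a (linear-time, constant-factor approximate) \emph{smallest} ball containing $\mu$ points: minimality lets one cover $B(c,2r)$ by $O(1)$ balls each containing $O(\mu)$ points, which is precisely the density bound your pigeonhole needs; a random radius in $[r,2r]$ then sparsifies the shell in expectation. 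The second gap is the one you flag yourself: the tangential case, where a large ball stabbed by a point far from the shell meets $\mathcal{S}$ in an arbitrarily thin cap, so no $\delta$-net point need lie in $B\cap\mathcal{S}$. This is not a technicality to be ``made crisp'' --- with a plain $\delta$-net on $\mathcal{S}$ and a width-$\delta$ shell the dichotomy as stated is false (a ball of radius $\delta/2$ containing a point at depth exactly $\delta$ can touch $\mathcal{S}$ at a single point), and resolving it requires a quantitative lemma relating the depth of the stabbing point, the radius of the ball, and the width of the cap $B\cap\mathcal{S}$, with the net spacing and shell width calibrated to that lemma. As written, the proposal establishes neither the $O(\mu^{1-1/d})$ bound on $|\Z|$ nor the stabbing property, so it does not constitute a proof.
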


The next corollary follows from Theorem \ref{th:bhatti} which will be very useful for the analysis of our local search algorithm. 

\begin{corollary}\label{cor:separator}
 Let $X$ be a set of points in $\mathcal{R}^d$,and $\mu >0$ be an integer such that $|X|> \alpha\mu$, where $\alpha$ is a constant. There is an algorithm which can compute, in $O(|X|)$ expected time, a set $\Z$ of $O({\mu}^{1-\frac{1}{d}})$ points and a ball $B$ containing $\Theta(\mu)$ points of $X$ in it, such that for any point $p \in \mathcal{R}^d$, $d(p,{\Z}) \leq \max\{d(p,X \setminus B), d(p,B \cap X)\}$.
\end{corollary}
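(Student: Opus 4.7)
\textbf{Proof plan for Corollary~\ref{cor:separator}.}
The plan is to apply Theorem~\ref{th:bhatti} directly and then, for each query point $p$, exhibit an explicit ball centered at $p$ that (i) is stabbed by $X$ and (ii) intersects the separating sphere $\mathcal{S}$; the conclusion of Theorem~\ref{th:bhatti} will then force some point of $\mathcal{Z}$ to lie inside this ball, which is exactly the distance bound we want.

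Concretely, I would first invoke Theorem~\ref{th:bhatti} on $X$ with parameter $\mu$ to obtain a set $\mathcal{Z}$ of size $O(\mu^{1-1/d})$ and a sphere $\mathcal{S}$ enclosing $\Theta(\mu)$ points of $X$ in expected time $O(|X|)$. Let $B$ be the closed ball bounded by $\mathcal{S}$. Choosing the constant $\alpha$ in the hypothesis $|X|>\alpha\mu$ to dominate the hidden constant in $\Theta(\mu)$, both $B\cap X$ and $X\setminus B$ are nonempty, so $B$ is the ball required by the corollary.

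Now fix an arbitrary $p\in\mathbb{R}^d$ and set $r_1=d(p,X\setminus B)$, $r_2=d(p,B\cap X)$, $r=\max(r_1,r_2)$ (if either set is empty the right-hand side is $+\infty$ and there is nothing to prove). Consider the closed ball $B(p,r)$. By definition it contains a witness $x_1\in X\setminus B$ (strictly outside $\mathcal{S}$) and a witness $x_2\in B\cap X$ (inside $\mathcal{S}$). Because $B(p,r)$ is path-connected and straddles $\mathcal{S}$, it must meet $\mathcal{S}$; moreover it contains the point $x_1\in X$, so it is stabbed by $X$. Applying Theorem~\ref{th:bhatti} to the singleton family $\mathcal{B}=\{B(p,r)\}$, some $q\in\mathcal{Z}$ stabs $B(p,r)$, i.e.\ $\|p-q\|\le r$. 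Hence $d(p,\mathcal{Z})\le\max\{d(p,X\setminus B),d(p,B\cap X)\}$, as desired.

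I do not expect a genuine obstacle here. The only point requiring a little care is the edge case in which the hypothesis $|X|>\alpha\mu$ might fail to guarantee $X\setminus B\ne\emptyset$; this is handled by taking $\alpha$ large enough relative to the constant hidden in the ``$\Theta(\mu)$ points inside $\mathcal{S}$'' guarantee of Theorem~\ref{th:bhatti}. Everything else is a one-line connectedness argument for why $B(p,r)$ meets $\mathcal{S}$, followed by a direct invocation of the separator theorem.
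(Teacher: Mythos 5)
Your proposal is correct and follows essentially the same route as the paper: apply Theorem~\ref{th:bhatti}, take $B$ to be the ball bounded by $\mathcal{S}$, and for each $p$ exhibit a ball centered at $p$ that is stabbed by $X$ and meets $\mathcal{S}$, so that some point of $\mathcal{Z}$ must stab it. The only (immaterial) difference is that the paper splits into the cases $p\in B$ and $p\notin B$ and uses the radius to the witness on the opposite side of $\mathcal{S}$ in each case, whereas you use a single ball of radius $\max\{d(p,X\setminus B),d(p,B\cap X)\}$ containing witnesses on both sides.
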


\begin{proof}
We use the same Algorithm in Theorem \ref{th:bhatti} to compute the sphere $\mathcal{S}$ and the set $\Z$. Let $B$ be the ball that has $\mathcal{S}$ as its boundary. Now consider any point $p \in \mathcal{R}^d$. Let $p_1$ (resp. $p_2$) be a point in $B \cap X$ (resp. $X \setminus B$) nearest to $p$. If $p \in B$, the ball $B_1$ centered at $p$ and having radius $||p-p_2||$ must intersect the sphere $\mathcal{S}$, as $p_2 \notin B$. As $p_2$ stabs $B_1$, by Theorem \ref{th:bhatti} there is a point in $\Z$ that also stabs $B_1$. Hence $d(p,{\Z}) \leq ||p-p_2|| = d(p,X \setminus B)$. Similarly, if $p \notin B$, the ball $B_2$ centered at $p$ and having radius $||p-p_1||$ intersects the sphere $\mathcal{S}$, as $p_1 \in B$. As $B_2$ is stabbed by $p_1$, by Theorem \ref{th:bhatti} there is a point in $\Z$ that also stabs $B_2$. Hence $d(p,{\Z}) \leq ||p-p_1|| \leq d(p,B \cap X)$ and the corollary follows.
% 
% Now it is sufficient to show that $\Z$ is a Voronoi separator for $B \cap X$ and $X \setminus B$. For the sake of contradiction suppose $\Z$ is not a Voronoi separator for these two sets. Then there are two points $p_1 \in B \cap X$ and $p_2 \in X \setminus B$ such that their Voronoi cells contain a common point $p$. Note that $p_1$ and $p_2$ are not on the boundary of $B$. Let $q$ be the point nearest to $p$ among all the points in $B$. Now if $p \in B$, the ball $B_1$ centered at $p$ and having radius $||p-q||$ is stabbed by a point $z_1 \in \Z$. As $p_2 \notin B$, it is not in $B_1$ either. Hence $p_2$ is not a nearest neighbor of $p$ in the Voronoi diagram of $X\cup \Z$, which is a contradiction. Similarly, if $p \notin B$, the ball $B_2$ centered at $p$ and having radius $||p-q||$ is stabbed by a point $z_2 \in \Z$. As $p_1 \in B \setminus \mathcal{S}$, it is not in $B_2$. Hence $p_1$ is not a nearest neighbor of $p$ in the Voronoi diagram of $X\cup \Z$, which is again a contradiction. 
\end{proof}

The algorithm in Corollary \ref{cor:separator} will be referred to as the \textit{Separator algorithm}.
\subsubsection{The Partitioning Algorithm}
For the sake of analysis fix an optimal solution $\Op$. Let $\Lp$ be the solution computed by the local search algorithm. We design a procedure PARTITION$({\Lp},{\Op},\eps)$ which divide the set $\Lp \cup \Op$ into disjoint subsets of small size using the Separator algorithm (see Algorithm \ref{alg:part}). The procedure iteratively removes points from the set until the size of the set becomes less than or equal to $\alpha\mu$, where $\alpha$ is the constant in Corollary \ref{cor:separator}, and $\mu=\frac{\gamma}{{\eps}^d}$ for some constant $\gamma$. Next, we describe some important properties of this procedure which will be helpful to give an approximation bound on the cost of the local search solution. But before proceeding further we define some notation.

\begin{algorithm}[]
    \caption{PARTITION$({\Lp},{\Op},\eps)$}
    \label{alg:part}
    \begin{algorithmic}[1]
%         \REQUIRE A subset $P \subseteq X$.
%         \ENSURE A cover of the points in $P$.
	\STATE $\mu=\frac{\gamma}{{\eps}^d}$, $i=1$, $L_1=\Lp$, $O_1=\Op$, ${\Z}_1=\phi$
	\WHILE {$|L_i\cup O_i \cup {\Z}_i| > \alpha\mu$}
	    \STATE Let $B_i,T_i$ be the ball and the point set computed by applying the Separator algorithm on the set $L_i\cup O_i\cup {\Z}_i$ with parameter $\mu$
	    \STATE ${\Lp}_i=L_i\cap B_i, {\Op}_i=O_i\cap B_i$
	    \STATE $L_{i+1}=L_i\setminus {\Lp}_i, O_{i+1} = O_i \setminus {\Op}_i$
	    \STATE ${\Z}_{i+1}=({\Z}_i\setminus B_i)\cup T_i$
% 	    \STATE ${\Z} = {\Z} \cup T_i$
	    \STATE $i=i+1$
	\ENDWHILE
	\STATE $I=i$
	\STATE Let $B_I$ be any ball that contains all the points in $L_I\cup O_I \cup {\Z}_I$
	\STATE $T_I = \phi$
%         \RETURN sol
    \end{algorithmic}
\end{algorithm}

Let $T=\cup_{i=1}^I T_i$ be the union of the point sets computed by the Separator algorithm in PARTITION$({\Lp},{\Op},\eps)$. 
Also let $C_l=\{p\in C| d(p,{\Lp}) \leq d(p,{\Op})\}$ and $C_o = C\setminus C_l$. Consider a point set $R \subset \mathcal{R}^d$. We denote the nearest neighbor voronoi diagram of $R$ by ${\V}_R$. For $p \in R$, let ${\V}_R(p)$ be the voronoi cell of $p$ in ${\V}_R$. Also let $C_R(p)={\V}_R(p)\cap C$, that is $C_R(p)$ is the set of clients that are contained in the voronoi cell of $p$ in ${\V}_R$. For $Q\subseteq R$, define $C_R(Q)={\cup}_{q\in Q} C_R(q)$.

Now consider a client $c$. Denote its nearest neighbor in $\Op$ (resp. $\Lp$) by $c(\Op)$ (resp. $c(\Lp)$). Also let $c_O=||c-c({\Op})||^2$ and $c_L=||c-c({\Lp})||^2$.

\begin{definition}
 An \textit{assignment} is a function that maps a set of clients to the set $\Lp\cup \Op\cup T$.
\end{definition}

Now with all these definitions we move on towards the analysis. We begin with the following observation. 

\begin{obs}\label{obs:cardTandcardproxy}
Consider the procedure PARTITION$({\Lp},{\Op},\eps)$. The following statements are true.
\begin{enumerate}
 \item $|{\Lp}_i\cup {\Op}_i\cup T_i \cup ({\Z}_i\cap B_i)|\leq\frac{\beta}{{\eps}^d}$ for a constant $\beta$ and $1\leq i\leq I$
 \item $I\leq {\eps}(|{\Lp}|+ |{\Op}|)/10$
 \item $|T|\leq {\eps}(|{\Lp}|+ |{\Op}|)/10$
 \item $\sum_{i=1}^I |T_i \cup ({\Z}_i\cap B_i)|\leq {\eps}(|{\Lp}|+ |{\Op}|)/5$
\end{enumerate}
\end{obs}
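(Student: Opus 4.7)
My plan is to handle the four parts in sequence, all following from careful accounting with the bounds guaranteed by Corollary~\ref{cor:separator}: in each call to the Separator algorithm, the ball $B_i$ contains $\Theta(\mu)$ points of the input and the returned set $T_i$ has size $O(\mu^{1-1/d})$.

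For Part 1, I would observe that ${\Lp}_i$, ${\Op}_i$, and ${\Z}_i \cap B_i$ are all subsets of the $\Theta(\mu)$ input points lying in $B_i$, so their union has size $O(\mu)$, and $|T_i| = O(\mu^{1-1/d}) = O(\mu)$. Substituting $\mu = \gamma/\eps^d$ yields a bound of $\beta/\eps^d$ for a suitable constant $\beta$ that absorbs the constants from Corollary~\ref{cor:separator} and the choice of $\gamma$.

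For Parts 2 and 3, the key idea is to track how the size of the ``active'' set $L_i \cup O_i \cup {\Z}_i$ evolves across iterations. In iteration $i$ we delete exactly the $\Theta(\mu)$ points inside $B_i$ and insert $T_i$ of size $O(\mu^{1-1/d})$. For $\gamma$ (hence $\mu$) large enough, the net decrease per iteration is $\Omega(\mu)$. Since ${\Z}_1 = \phi$, the initial size is $|{\Lp}|+|{\Op}|$, so the loop runs $I = O((|{\Lp}|+|{\Op}|)/\mu) = O(\eps^d(|{\Lp}|+|{\Op}|)/\gamma)$ iterations. Since WLOG $\eps \le 1$ and $d \ge 1$, we have $\eps^d \le \eps$; choosing $\gamma$ large enough then delivers Part 2. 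For Part 3, $|T| \le I \cdot O(\mu^{1-1/d}) = O((|{\Lp}|+|{\Op}|)/\mu^{1/d}) = O(\eps(|{\Lp}|+|{\Op}|)/\gamma^{1/d})$, which is at most $\eps(|{\Lp}|+|{\Op}|)/10$ for $\gamma$ large enough.

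Part 4 reduces to Part 3 by a simple charging argument: since ${\Z}_1 = \phi$ and ${\Z}$ grows only by unions with the $T_j$'s, every point that ever appears in some ${\Z}_i \cap B_i$ was first inserted via an earlier $T_{j-1}$ and is subsequently removed exactly once, giving $\sum_i |{\Z}_i \cap B_i| \le |T|$, hence $\sum_i |T_i \cup ({\Z}_i \cap B_i)| \le 2|T| \le \eps(|{\Lp}|+|{\Op}|)/5$. The main (and essentially only) thing to be careful about is choosing the constant $\gamma$ coherently \emph{after} fixing the hidden constants from Corollary~\ref{cor:separator}: $\gamma$ must be large enough simultaneously to make the per-iteration net decrease genuinely $\Omega(\mu)$ (so the $O(\mu^{1-1/d})$ insertions are dominated by the $\Theta(\mu)$ deletions) and to absorb the constants in the $1/\gamma$ and $1/\gamma^{1/d}$ factors that appear in Parts 2 and 3.
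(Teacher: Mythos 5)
Your proposal is correct and follows essentially the same route as the paper: part 1 from $|X \cap B_i| = \Theta(\mu)$ and $|T_i| = O(\mu^{1-1/d})$, parts 2 and 3 from the net $\Omega(\mu)$ decrease of the active set per iteration with $\gamma$ chosen large enough to absorb the hidden constants, and part 4 from the charge that each separator point lands in at most one later ${\Z}_j \cap B_j$ before being removed, giving the factor of $2$. The only (immaterial) difference is that your exponent $\gamma^{1/d}$ in part 3 is in fact the correct computation of $\mu^{-1/d}$, whereas the paper writes $\gamma^{1-1/d}$; either way the conclusion follows by taking $\gamma$ sufficiently large.
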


\begin{proof}
 1. Note that $|T_i| = O({\mu}^{1-\frac{1}{d}}) = O(\frac{1}{{\eps}^{d-1}})$ and $|{\Lp}_i\cup {\Op}_i\cup {\Z}_i| = \theta(\mu) = \theta(\frac{1}{{\eps}^d})$ for $1\leq i\leq I$. Thus there is a constant $\beta$ such that $|{\Lp}_i\cup {\Op}_i\cup T_i \cup ({\Z}_i\cap B_i)|\leq\frac{\beta}{{\eps}^d}$.
 \\2. As in each iteration we add $O({\mu}^{1-\frac{1}{d}})$ points and remove $\theta(\mu)$ points, the number of iterations $I = O(\frac{|{\Lp}|+ |{\Op}|}{\mu})=O({\eps}^d(\frac{|{\Lp}|+ |{\Op}|}{\gamma}))$. By choosing the constant $\gamma$ sufficiently large one can ensure that $I \leq {\eps}(|{\Lp}|+ |{\Op}|)/10$.
 \\3. $|T|=\sum_{i=1}^I |T_i| = O(\frac{|{\Lp}|+ |{\Op}|}{\mu}\cdot {\mu}^{1-\frac{1}{d}}) = O(\eps(|{\Lp}|+ |{\Op}|)/{\gamma}^{1-\frac{1}{d}}) \leq {\eps}(|{\Lp}|+ |{\Op}|)/10$, by choosing the value of $\gamma$ sufficiently large.
 \\4. Consider any point $p \in T_i$ for some $1\leq i\leq I$. If $p \in {\Z}_j\cap B_j$ for some $j > i$, then $p$ is removed in iteration $j$ and hence cannot appear in any other ${\Z}_t\cap B_t$ for $j+1 \leq t \leq I$. Thus $p$ can appear in at most two sets in the collection $\{T_1 \cup ({\Z}_1\cap B_1),\ldots,T_I \cup ({\Z}_I\cap B_I)\}$. It follows that $\sum_{i=1}^I |T_i \cup ({\Z}_i\cap B_i)|\leq 2\sum_{i=1}^I |T_i|=2|T|\leq {\eps}(|{\Lp}|+ |{\Op}|)/5$.
\end{proof}

The next lemma states the existence of a ``cheap'' assignment for any client $c$, such that its nearest neighbor $c(\Op)$ in $\Op$ is in ${\Op}_i$, its nearest neighbor $c(\Lp)$ in ${\Lp}$ is in ${\Lp}_j$ with $i< j$.

\begin{lemma}\label{lem:ginballj}
 Consider any client $c$, such that $c({\Op}) \in {\Op}_i$, $c({\Lp}) \in {\Lp}_j$ with $1 \leq i< j\leq I$. Also consider the sets ${T}_j$, ${\Z}_j$, and the ball $B_j$ computed by PARTITION$({\Lp},{\Op},\eps)$. There exists a point $p \in ({\Z}_j\cap B_j)\cup T_j$ such that $||c-p|| \leq \max\{||c-c({\Op})||,||c-c({\Lp})||\}$.
\end{lemma}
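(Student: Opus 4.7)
The plan is to apply Corollary~\ref{cor:separator} iteratively. First at iteration $i$ to obtain an initial ``proxy'' in $T_i$ for the (now absent) $c(\Op)$, and then chase this proxy through iterations $i+1,\ldots,j$: whenever the current proxy lies inside the ball $B_k$ and would be discarded from $\Z_{k+1}$, I invoke the corollary again at iteration $k$ to produce a replacement in $T_k \subseteq \Z_{k+1}$ while preserving the distance bound. A last application of the corollary at iteration $j$ handles the case that the surviving proxy sits in $\Z_j \setminus B_j$.

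Let $X_k := L_k \cup O_k \cup \Z_k$ and $M := \max\{\|c-c(\Op)\|,\|c-c(\Lp)\|\}$. Two membership facts will be used repeatedly. Since $c(\Op) \in \Op_i = O_i \cap B_i \subseteq X_i \cap B_i$, we have $d(c, X_i \cap B_i) \le \|c-c(\Op)\| \le M$. For every $k$ with $i \le k < j$, the chain $c(\Lp) \in \Lp_j \subseteq L_j \subseteq L_{k+1} = L_k \setminus \Lp_k$ shows that $c(\Lp) \in L_k$ but $c(\Lp) \notin \Lp_k = L_k \cap B_k$; consequently $c(\Lp) \in L_k \setminus B_k \subseteq X_k \setminus B_k$, giving $d(c, X_k \setminus B_k) \le \|c-c(\Lp)\| \le M$.

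Applying Corollary~\ref{cor:separator} at iteration $i$ now yields a point $p_{i+1} \in T_i \subseteq \Z_{i+1}$ with $\|c-p_{i+1}\| \le M$. I then prove by induction on $k$ that for every $k$ with $i+1 \le k \le j$ there is a point $p_k \in \Z_k$ satisfying $\|c-p_k\| \le M$. For the step from $k$ to $k+1$ (with $k < j$): if $p_k \notin B_k$ then $p_k \in \Z_k \setminus B_k \subseteq \Z_{k+1}$ and I set $p_{k+1}:=p_k$; otherwise $p_k \in \Z_k \cap B_k \subseteq X_k \cap B_k$, so $d(c, X_k \cap B_k) \le M$, and combining this with the second membership fact above lets Corollary~\ref{cor:separator} at iteration $k$ produce a point of $T_k \subseteq \Z_{k+1}$ within distance $M$ of $c$.

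At $k=j$ we thus obtain $p_j \in \Z_j$ with $\|c-p_j\|\le M$. If $p_j \in B_j$ then $p_j \in \Z_j \cap B_j$ and we are done; otherwise $p_j \in X_j \setminus B_j$ yields $d(c, X_j \setminus B_j)\le M$, while $c(\Lp) \in \Lp_j = L_j \cap B_j \subseteq X_j \cap B_j$ yields $d(c, X_j \cap B_j) \le \|c-c(\Lp)\| \le M$, so one last invocation of Corollary~\ref{cor:separator} at iteration $j$ produces the desired point in $T_j$. The only real obstacle in the argument is bookkeeping: verifying at every iteration $k \le j$ that $c(\Lp)$ still sits in $L_k$ outside $B_k$, and that the current proxy is either outside $B_k$ (hence survives into $\Z_{k+1}$) or inside $B_k$ (hence serves as a witness for $X_k \cap B_k$); both facts follow from the nested structure of the sequence $L_1 \supseteq L_2 \supseteq \cdots$ and the update rule $\Z_{k+1} = (\Z_k \setminus B_k) \cup T_k$.
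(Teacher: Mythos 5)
Your proposal is correct and follows essentially the same route as the paper's own proof: establish the bound $d(c,T_i)\leq M$ at iteration $i$, propagate a proxy point through $\Z_{i+1},\ldots,\Z_j$ by re-invoking Corollary~\ref{cor:separator} whenever the proxy falls inside the current ball, and make one final application at iteration $j$. The only detail you gloss over is the case $j=I$, where no separator is computed and $T_I=\emptyset$, so the ``one last invocation'' is unavailable; the paper handles this by observing that $B_I$ contains all of $\Z_I$ by construction, so $p_j\in\Z_j\cap B_j$ automatically and the problematic branch is vacuous.
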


\begin{proof}
To prove this lemma, at first we prove the following claim.
 \begin{claim}\label{cl:qinz}
  For any $i+1 \leq t\leq j$, there exists a point $p \in {\Z}_t$ such that $||c-p|| \leq \max\{||c-c({\Op})||,||c-c({\Lp})||\}$.
 \end{claim}

 \begin{proof}
 We prove this claim using induction on the iteration number. In base case consider the iteration $i$. Let $X=L_i\cup O_i \cup {\Z}_i$. As $c({\Op}) \in B_i$ and $c({\Lp}) \in B_j$, $c({\Op}), c({\Lp}) \in X$. Now by Corollary \ref{cor:separator}, $d(c,{T_i}) \leq \max\{d(c,X \setminus B_i), d(c,B_i \cap X)\}$. As $c({\Op})$ is the nearest neighbor of $c$ in ${\Op}$ and $c({\Op}) \in B_i$, $d(c,B_i \cap X) \leq ||c-c({\Op})||$. Also $c({\Lp})$ is the nearest neighbor of $c$ in ${\Lp}$ and $c({\Lp}) \notin B_i$. Thus $d(c,X \setminus B_i) \leq ||c-c({\Lp})||$. Hence $d(c,{T_i}) \leq \max\{||c-c({\Op})||,||c-c({\Lp})||\}$. Let $p$ be the point in $T_i$ nearest to $c$. As $T_i \subseteq {\Z}_{i+1}$, $p \in {\Z}_{i+1}$ and the base case holds. 

 Now suppose the claim is true for any iteration $t < j-1 \leq I-1$. We show that the claim is also true for iteration $t+1$. By induction, there is a point $p\in {\Z}_{t+1}$ such that $||c-p|| \leq \max\{||c-c({\Op})||,||c-c({\Lp})||\}$. Now there can be two cases: (i) $p\notin B_{t+1}$, and (ii) $p\in B_{t+1}$. Consider the first case. In this case, by definition of ${\Z}_{t+2}$, $p\in {\Z}_{t+2}$ and the claim holds. Thus consider the second case. Let $X=L_{t+1}\cup O_{t+1} \cup {\Z}_{t+1}$. By Corollary \ref{cor:separator}, $d(c,{T_{t+1}}) \leq \max\{d(c,X \setminus B_{t+1}), d(c,B_{t+1} \cap X)\}$. As $p \in B_{t+1}\cap X$, $d(c,B_{t+1} \cap X) \leq ||c-p||\leq \max\{||c-c({\Op})||,||c-c({\Lp})||\}$. Now $c({\Lp}) \notin B_{t+1}$, as $t < j-1$. Also $c({\Lp})\in L_{t+1}\subseteq X$. Thus $d(c,X \setminus B_{t+1}) \leq ||c-c({\Lp})||$. It follows that $d(c,{T_{t+1}}) \leq \max\{||c-c({\Op})||,||c-c({\Lp})||\}$. Let $q$ be the point in $T_{t+1}$ nearest to $c$. As $T_{t+1} \subseteq {\Z}_{t+2}$, $q \in {\Z}_{t+2}$ and the claim holds also for this case.
 \end{proof}

 Consider the iteration $j$. From Claim \ref{cl:qinz} it follows that there exists a point $p \in {\Z}_j$ such that $||c-p|| \leq \max\{||c-c({\Op})||,||c-c({\Lp})||\}$. Thus if $p \in B_{j}$, then $p \in {\Z}_{j}\cap B_{j}$, and we are done. Note that the way $B_I$ is chosen, ${\Z}_{I} \subseteq B_I$. Thus in case $j =I$, $p \in {\Z}_{j}\cap B_{j}$. Hence consider the case when $j \neq I$ and $p \notin B_{j}$. Let $X=L_{j}\cup O_{j} \cup {\Z}_{j}$. As $p \in {\Z}_{j}$, $p \in X$. Also by Corollary \ref{cor:separator}, $d(c,{T_{j}}) \leq \max\{d(c,X \setminus B_{j}), d(c,B_{j} \cap X)\}$. As $c({\Lp}) \in B_{j}\cap X$, $d(c,B_{j} \cap X) \leq ||c-c({\Lp})||$. Now $p \in X \setminus B_{j}$. Thus $d(c,X \setminus B_{j}) \leq ||c-p|| \leq \max\{||c-c({\Op})||,||c-c({\Lp})||\}$. Hence $d(c,{T_{j}}) \leq \max\{||c-c({\Op})||,||c-c({\Lp})||\}$ and the lemma follows. 
\end{proof}

Now we extend Lemma \ref{lem:ginballj} for any client whose nearest neighbor in ${\Lp}$ is in ${\Lp}_j$, but the nearest neighbor in ${\Op}$ is not in ${\Op}_j$, where $1\leq j\leq I$.

\begin{lemma}\label{lem:gnotinballi}
 Consider the sets ${T}_j$, ${\Z}_j$ and the ball $B_j$ computed by PARTITION$({\Lp},{\Op},\eps)$, where $1\leq j\leq I$. There is an assignment $g$ of the clients in $C_{\Lp}({\Lp}_j)\setminus C_{\Op}({\Op}_j)$ to $T_j \cup ({\Z}_j\cap B_j)$ with the following properties: 
 \begin{enumerate}
  \item for $c\in (C_{\Lp}({\Lp}_j)\setminus C_{\Op}({\Op}_j))\cap C_l$, ${||c-g(c)||}^2 \leq c_O$;
  \item for $c\in (C_{\Lp}({\Lp}_j)\setminus C_{\Op}({\Op}_j))\cap C_o$, ${||c-g(c)||}^2 \leq c_L$.
 \end{enumerate}
\end{lemma}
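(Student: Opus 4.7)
The plan is to show that for every $c \in C_{\Lp}(\Lp_j)\setminus C_{\Op}(\Op_j)$ there exists a point $p \in T_j \cup (\Z_j \cap B_j)$ satisfying $\|c-p\|^2 \leq \max\{c_O, c_L\}$, and to define $g(c) := p$. Once this is in hand, the two conclusions of the lemma drop out of the definitions of $C_l$ and $C_o$: for $c \in C_l$ we have $c_L \leq c_O$, so $\max\{c_O, c_L\} = c_O$; and for $c \in C_o$ we have $c_O < c_L$, so $\max\{c_O, c_L\} = c_L$.

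For such a client $c$, its nearest local facility $c(\Lp)$ lies in $\Lp_j$ while its nearest global facility $c(\Op)$ lies in $\Op_i$ for some $i \neq j$. I would split on the sign of $i-j$. In the case $i < j$, this is precisely the hypothesis of Lemma \ref{lem:ginballj}, which already produces a point $p \in (\Z_j \cap B_j) \cup T_j$ with $\|c-p\| \leq \max\{\|c-c(\Op)\|, \|c-c(\Lp)\|\}$; squaring gives the required bound. In the case $i > j$, I would invoke Corollary \ref{cor:separator} directly at iteration $j$ of PARTITION. Since $c(\Op)$ is removed only at the later iteration $i$, it still belongs to $O_j$; moreover it cannot lie in $B_j$, for otherwise it would have been placed into $\Op_j = O_j \cap B_j$, contradicting $c \notin C_{\Op}(\Op_j)$. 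Hence $c(\Op) \in X_j\setminus B_j$ and, since $c(\Lp) \in \Lp_j \subseteq B_j \cap X_j$, Corollary \ref{cor:separator} applied to $X_j := L_j \cup O_j \cup \Z_j$ yields
\[
d(c, T_j) \;\leq\; \max\{d(c, X_j \setminus B_j),\, d(c, B_j \cap X_j)\} \;\leq\; \max\{\|c-c(\Op)\|,\, \|c-c(\Lp)\|\}.
\]
Taking $p$ to be the nearest point of $T_j \subseteq T_j \cup (\Z_j \cap B_j)$ to $c$ completes this case.

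I do not anticipate any serious obstacle: the case $i < j$ is an immediate black-box application of Lemma \ref{lem:ginballj}, and the case $i > j$ is a single appeal to the separator corollary. The only delicate point is verifying in the latter case that $c(\Op)\notin B_j$, and this is forced by the very construction of $\Op_j$ during iteration $j$ of PARTITION.
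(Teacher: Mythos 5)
Your proof is correct and follows essentially the same route as the paper's: Lemma~\ref{lem:ginballj} for the case $i<j$, a direct application of Corollary~\ref{cor:separator} at iteration $j$ for the case $i>j$, and then squaring combined with the definitions of $C_l$ and $C_o$. The only detail worth making explicit is that the case $i>j$ can only arise when $j<I$ (for $j=I$ all remaining points lie in $B_I$, forcing $i<j$), which matters because $T_I=\phi$; in your write-up this case is vacuous rather than broken, so the argument stands.
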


\begin{proof}
 We show how to construct the assignment $g$ for each client in $C_{\Lp}({\Lp}_j)\setminus C_{\Op}({\Op}_j)$. Consider any client $c \in C_{\Lp}({\Lp}_j)\setminus C_{\Op}({\Op}_j)$. Let ${\Op}_i$ be the subset of ${\Op}$ that contains $c({\Op})$. If $j$ is equal to $I$, then $i < j$. Otherwise, there could be two cases: (i) $i < j$, and (ii) $i > j$. Consider the case when $i < j$ for $1\leq j\leq I$. By Lemma \ref{lem:ginballj}, there is a point $p \in ({\Z}_j\cap B_j)\cup T_j$ such that $||c-p|| \leq \max\{||c-c({\Op})||,||c-c({\Lp})||\}$. Let $g(c)$ be $p$ in this case. Now consider the case when $i > j$ such that $j < I$. Let $X=L_{j}\cup O_{j} \cup {\Z}_{j}$. By Corollary \ref{cor:separator}, $d(c,{T_{j}}) \leq \max\{d(c,X \setminus B_{j}), d(c,B_{j} \cap X)\}$. As $c({\Lp}) \in B_{j}\cap X$, $d(c,B_{j} \cap X) \leq ||c-c({\Lp})||$. Now note that $c({\Op}) \in X$. Also $c({\Op}) \notin B_j$, as $c({\Op})\in {\Op}_i$ and $i > j$. Thus $c({\Op}) \in X \setminus B_{j}$ and $d(c,X \setminus B_{j}) \leq ||c-c({\Op})||$. Hence $d(c,{T_{j}}) \leq \max\{||c-c({\Op})||,||c-c({\Lp})||\}$. Let $g(c)$ be the point in $T_j$ nearest to $c$ in this case. 
 
 In both cases $||c-g(c)||\leq \max\{||c-c({\Lp})||,||c-c({\Op})||\}$. If $c \in C_l$, ${||c-g(c)||}^2 \leq {||c-c({\Op})||}^2=c_O$. Otherwise, $c \in C_o$, and thus ${||c-g(c)||}^2 \leq {||c-c({\Lp})||}^2= c_L$. Hence the lemma holds.
\end{proof}

\subsubsection{Approximation Bound}
% We construct a solution $G$ by adding all the points in $T$ to $\Op$. We keep the assignment of the clients same as before. Then we have the following observation regarding the cost of $G$.
% 
% \begin{obs}
%  $cost(G)\leq cost({\Op})+O(\eps f(|\Lp|\cup |\Op|))$.
% \end{obs}

% We show that the cost of $\Lp$ is at most $(1+O(\eps))$ times the cost of $G$. 
The next lemma gives an upper bound on the quality of the local search solution.

\begin{lemma}
 $cost({\Lp}) \leq (1+O(\eps)) cost({\Op})$.
\end{lemma}

\begin{proof}
Fix an iteration $i$, where $1\leq i\leq I$. Consider the solution $S_i=({\Lp}\setminus{\Lp}_i)\cup {\Op}_i\cup T_i \cup ({\Z}_i\cap B_i)$. By Observation \ref{obs:cardTandcardproxy}, $|{\Lp}_i\cup {\Op}_i\cup T_i \cup ({\Z}_i\cap B_i)|\leq\frac{\beta}{{\eps}^d}$. Thus $|{\Lp}\setminus S_i|+|S_i\setminus {\Lp}|\leq\frac{\beta}{{\eps}^d}$. By choosing the constant $c$ in Algorithm \ref{alg:local} sufficiently large, one can ensure that $\beta \leq c$. Hence due to the local optimality condition in Algorithm \ref{alg:local} it follows that, 

\begin{dmath}\label{in:costsi} 
cost(S_i)\geq (1-\frac{1}{n})cost(\Lp)
\end{dmath}
To argue about the cost of $S_i$ we use an assignment of the clients to the facilities in $S_i$. Consider a client $c$. There can be three cases: (i) $c$ is nearer to a facility of ${\Op}_i$ than the facilities in $({{\Op}\setminus {\Op}_i})\cup ({\Lp}\setminus {\Lp}_i)$, that is, $c \in C_{\Op}({\Op}_i)\cap (C_o\cup C_{\Lp}({\Lp}_i))$, (ii) $c$ is not in $C_{\Op}({\Op}_i)$ and $c$ is nearer to a facility of ${\Lp}_i$ than the facilities in ${\Lp}\setminus {\Lp}_i$, that is, $c \in C_{\Lp}({\Lp}_i) \setminus C_{\Op}({\Op}_i)$, (iii) $c$ does not appear in case (i) and (ii), that is, $c$ is not in the union of $C_{\Op}({\Op}_i)\cap (C_o\cup C_{\Lp}({\Lp}_i))$ and $C_{\Lp}({\Lp}_i) \setminus C_{\Op}({\Op}_i)$. Let $R_i$ be the set of the clients that appear in case (iii). Note that a client cannot appear in both cases (i) and (ii), as the sets $C_{\Op}({\Op}_i)\cap (C_o\cup C_{\Lp}({\Lp}_i))$ and $C_{\Lp}({\Lp}_i) \setminus C_{\Op}({\Op}_i)$ are disjoint. Also note, that if $c\in C_{\Lp}({\Lp}_i)$, $c$ must appear in case (i) or (ii). Thus if $c$ is corresponding to case (iii), its nearest neighbor $c({\Lp})$ in ${\Lp}$ must be in ${\Lp}\setminus {\Lp}_i$.
% is nearer to a facility of ${\Lp}\setminus {\Lp}_i$ than the facilities in ${\Lp}_i \cup {\Op}_i$, that is $c\in C_{{\Lp} \cup {\Op}_i}({\Lp}\setminus {\Lp}_i)$.

Now we describe the assignment. Note that we can assign the clients only to the facilities in $S_i=({\Lp}\setminus{\Lp}_i)\cup {\Op}_i\cup T_i \cup ({\Z}_i\cap B_i)$. For a client of type (i), assign it to a facility in ${\Op}_i$ nearest to it. For a client of type (ii), use the assignment $g$ in Lemma \ref{lem:gnotinballi} to assign it to a point in $T_i \cup ({\Z}_i\cap B_i)$. For a client of type (iii), assign it to a facility in ${\Lp}\setminus {\Lp}_i$ nearest to it. Thus by Inequality \ref{in:costsi},
\begin{multline}
 |({\Lp}\setminus{\Lp}_i)\cup {\Op}_i\cup T_i \cup ({\Z}_i\cap B_i)|f+\sum_{c\in C_{\Op}({\Op}_i)\cap (C_o\cup C_{\Lp}({\Lp}_i))} c_O+\sum_{c\in C_{\Lp}({\Lp}_i) \setminus C_{\Op}({\Op}_i)} {||c-g(c)||}^2+\\ \sum_{c\in R_i} c_L \geq (1-\frac{1}{n})(|{\Lp}|f+\sum_{c\in C} c_L)
\end{multline}
By Lemma \ref{lem:gnotinballi}, for a client $c$ in $(C_{\Lp}({\Lp}_i) \setminus C_{\Op}({\Op}_i))\cap C_o$, ${||c-g(c)||}^2$ is at most $c_L$; for a client $c$ in $(C_{\Lp}({\Lp}_i) \setminus C_{\Op}({\Op}_i))\cap C_l$, ${||c-g(c)||}^2$ is at most $c_O$. It follows that,
\begin{multline}
 |{\Op}_i\cup T_i \cup ({\Z}_i\cap B_i)|f+\sum_{c\in C_{\Op}({\Op}_i)\cap (C_o\cup C_{\Lp}({\Lp}_i))} (c_O-c_L)+\sum_{c\in (C_{\Lp}({\Lp}_i) \setminus C_{\Op}({\Op}_i))\cap C_o} (c_L-c_L)+\\ \sum_{c\in (C_{\Lp}({\Lp}_i) \setminus C_{\Op}({\Op}_i))\cap C_l} (c_O-c_L) \geq -\frac{1}{n}cost({\Lp})+|{\Lp}_i|f
\end{multline}
\begin{multline}
 \Rightarrow |{\Op}_i\cup T_i \cup ({\Z}_i\cap B_i)|f+\sum_{c\in C_{\Op}({\Op}_i)\cap (C_o\cup C_{\Lp}({\Lp}_i))} (c_O-c_L)+ \sum_{c\in (C_{\Lp}({\Lp}_i) \setminus C_{\Op}({\Op}_i))\cap C_l} (c_O-c_L)\\ \geq -\frac{1}{n}cost({\Lp})+|{\Lp}_i|f
\end{multline}
\begin{align*}
 \Rightarrow |{\Op}_i\cup T_i \cup ({\Z}_i\cap B_i)|f+\sum_{c\in C_{{\Op}\cup {\Lp}} ({\Op}_i\cup {\Lp}_i)} (c_O-c_L) \geq -\frac{1}{n}cost({\Lp})+|{\Lp}_i|f
\end{align*}

The last inequality follows by noting that the union of $C_{\Op}({\Op}_i)\cap (C_o\cup C_{\Lp}({\Lp}_i))$ and $(C_{\Lp}({\Lp}_i) \setminus C_{\Op}({\Op}_i))\cap C_l$ is equal to the set $C_{{\Op}\cup {\Lp}} ({\Op}_i\cup {\Lp}_i)$. Summing over all $i$ we get,

\begin{multline}
\sum_{i=1}^I |{\Op}_i|f+\sum_{i=1}^I |T_i \cup ({\Z}_i\cap B_i)|f+\sum_{i=1}^I \sum_{c\in C_{\Op\cup \Lp}({\Op}_i\cup {\Lp}_i)} (c_O-c_L) \geq -O(\eps) cost({\Lp})+\sum_{i=1}^I |{\Lp}_i|f
\end{multline}

Note that $I=O({\eps}(|{\Lp}|+ |{\Op}|))=O({\eps}n)$ by Observation \ref{obs:cardTandcardproxy}. Now again by Observation \ref{obs:cardTandcardproxy}, $\sum_{i=1}^I |T_i \cup ({\Z}_i\cap B_i)|=O({\eps}(|{\Lp}|+ |{\Op}|))$. Hence we get,

\begin{align*}
|{\Op}|f+O(\eps(|{\Lp}|+ |{\Op}|))f+\sum_{c\in C} (c_O-c_L) \geq -O(\eps) cost({\Lp})+|{\Lp}|f
\end{align*}
\begin{align*}
 \Rightarrow cost({\Lp}) \leq (1+O(\eps)) cost({\Op})
\end{align*}
This completes the proof of the lemma.
\end{proof}

As mentioned before for fixed $d$, the running time of Algorithm \ref{alg:local} is polynomial and hence we have established the following theorem.
\begin{theorem}\label{th:flptas}
 There is a local search algorithm for SOS-FL which yields a PTAS.
\end{theorem}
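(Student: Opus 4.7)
The hard work is already done: the preceding lemma establishes that any locally optimal solution $\Lp$ satisfies $\cost(\Lp) \leq (1+O(\eps))\cost(\Op)$. Absorbing the hidden constant by running the algorithm with $\eps' = \eps/K$ for the appropriate constant $K$ turns this into a $(1+\eps)$-approximation. Thus the only remaining task for the theorem is to verify that Algorithm~\ref{alg:local} terminates in polynomial time for every fixed $d$ and every fixed $\eps > 0$. I would split this into (a) bounding the number of while-loop iterations and (b) bounding the work per iteration.

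For (a), the initial solution places a facility at every client, so its cost is exactly $nf$, which is at most $n\cdot \opt$ since $\opt \geq f$. Each iteration strictly decreases the cost by a multiplicative factor of at least $(1-1/n)$, and the cost never falls below $\opt$. Hence the number of iterations is at most $\log_{(1-1/n)^{-1}}(n) = O(n\log n)$, which is polynomial.

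For (b), a single iteration searches for a set $F_1$ with $|F\triangle F_1| \leq c/\eps^d$ whose cost improves upon $(1-1/n)\cost(F)$. I would enumerate the at most $|F|^{O(1/\eps^d)} \leq n^{O(1/\eps^d)}$ choices for the set $A = F\setminus F_1$ to be swapped out; this is polynomial since $|F|$ never exceeds $n$ (the algorithm starts at $n$ and each improving swap with $|F_1| > |F|$ in particular must reduce cost, and $|F_1\setminus F| \leq c/\eps^d$ limits growth each step). Fixing $A$, one must choose the insertion set $B = F_1\setminus F$ of size at most $c/\eps^d$ anywhere in $\Real^d$. The key geometric fact is that under the squared-Euclidean objective the optimal location of a single facility serving a fixed set of clients is the centroid of those clients; consequently, the optimal $B$ of a given cardinality is determined by a partition of the reassigned clients into groups, with one centroid per group. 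By the standard enumeration technique of Inaba, Katoh, and Imai~\cite{InabaKI94}, the distinct partitions of $n$ points in $\Real^d$ into $O(1/\eps^d)$ Voronoi cells can be enumerated in time polynomial in $n$ for fixed $d$ and $\eps$; for each such partition we compute the centroids and evaluate the cost in polynomial time, and keep the best.

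Multiplying (a) and (b) shows the overall running time is polynomial for every fixed $d$ and $\eps$, which combined with the approximation lemma yields the PTAS. The only real subtlety I anticipate is the second half of (b), namely formally invoking \cite{InabaKI94} to enumerate the relevant centroid configurations; once that is in place the rest is bookkeeping on the iteration count and swap enumeration.
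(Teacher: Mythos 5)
Your proposal is correct and follows essentially the same route as the paper: the approximation guarantee is exactly the preceding lemma, and the running time is handled by the $(1-1/n)$ improvement condition (bounding iterations by $O(n\log n)$ from the initial cost $nf \leq n\cdot\opt$) together with enumeration of swap-out sets and the centroid/Voronoi-partition enumeration of \cite{InabaKI94} for the swapped-in facilities, which is precisely what the paper invokes. Your write-up is in fact somewhat more explicit than the paper's one-line proof; the only cosmetic point is that the cleanest reason $|F|\leq n$ throughout is simply $f|F|\leq\cost(F)\leq nf$.
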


\section{Bi-criteria Approximation scheme for $k$-means}\label{sec:kmeans}
In this section we describe a local search algorithm for $k$-means which uses $(1+O(\eps))k$ centers and yields a solution whose cost is at most $(1+O(\eps))$ times the cost of an optimal $k$-means solution. The local search algorithm and its analysis are very similar to the ones for SOS-FL. Recall that in $k$-means we are given a set $P$ of $n$ points in $\mathcal{R}^d$ and an integer $k > 0$.
\subsection{The Local Search Algorithm}
Fix an $\eps > 0$. 
% Like in Section \ref{sec:FL} here also we assume that the cluster centers always belong to the centroid set $\mathcal{D}$. 
The local search algorithm starts with the solution computed by the $9+\eps$ factor approximation algorithm in \cite{KanungoMNPSW04} (see Algorithm \ref{alg:kmeans}). 
% We note that the way their algorithm works, it indeed select a subset of $\mathcal{D}$. 
Upon termination, the locally optimal solution $K$ has exactly $(1+5\eps)k$ centers. Using a standard argument like in \cite{InabaKI94}, one can show that this algorithm runs in polynomial time.

\begin{algorithm}[]
     \caption{Local Search}
    \label{alg:kmeans}
    \begin{algorithmic}[1]
        \REQUIRE A set of points $P \subset \mathcal{R}^d$, an integer $k$, a constant $\eps > 0$.
        \ENSURE A set of centers.
	\STATE $K \leftarrow$ the solution returned by the algorithm in \cite{KanungoMNPSW04}
	\STATE Add arbitrary centers to $K$ to ensure $|K|=(1+5\eps)k$	    
 	\WHILE {$\exists$ a set $K_1$ s.t. $|K_1| \leq (1+5\eps)k$, $cost(K_1) < (1-\frac{1}{n})cost(K)$ and $|K_1\setminus K|+|K\setminus K_1|\leq \frac{c}{{\eps}^{2d}}$}
	  \STATE $K\leftarrow K_1$
	  \STATE If needed, add arbitrary centers to $K$ to ensure $|K|=(1+5\eps)k$	    
	\ENDWHILE
        \RETURN $K$
    \end{algorithmic}
\end{algorithm}

\subsection{Analysis of the Local Search Algorithm}
Let $\Lp$ be the solution computed by Algorithm \ref{alg:kmeans}. For the sake of analysis fix an optimal solution $\Op$. We use the procedure PARTITION$({\Lp},{\Op},\eps)$ to 
% divide the set $\Lp\cup \Op$ into small subsets. 
compute the sets ${\Lp}_i$, ${\Op}_i$, ${\Z}_i\cap B_i$ and $T_i$ for $1\leq i\leq I$. We use the same $\mu=\frac{\gamma}{{\eps}^d}$ in this procedure. 
We note, that Observation \ref{obs:cardTandcardproxy}, Lemma \ref{lem:gnotinballi}, and Lemma \ref{lem:ginballj} hold in this case also, as they directly follow from the PARTITION procedure, which works on any two input sets of points designated by $\Lp$ and $\Op$. Let $R_i={\Lp}_i\cup {\Op}_i\cup T_i \cup ({\Z}_i\cap B_i)$ for $1\leq i\leq I$. Note, by Observation \ref{obs:cardTandcardproxy}, that $|R_i| \leq \frac{\beta}{{\eps}^d}$. Also note that ${\Z}_i\cap B_i \subseteq T$ for each $1\leq i\leq I$, where $T=\cup_{i=1}^I T_i$. Now we use the following lemma to group the balls returned by PARTITION into groups of ``small'' size. This lemma is similar to the Balanced Clustering Lemma in \cite{Cohen-AddadM15}.

\begin{lemma}\label{lem:cluster}
 Consider the collection $R = \{R_1,\ldots,R_I\}$ of sets with $R_j = {\Lp}_j\cup {\Op}_j\cup T_j \cup ({\Z}_j\cap B_j)$ and $|R_j| \leq \frac{\beta}{{\eps}^d}$ for $1\leq j\leq I$, where $\beta$ is the constant in Observation \ref{obs:cardTandcardproxy}. There exists a collection $\mathcal{P}=\{{P}_1,\ldots, {P}_p\}$, with $P_i \subseteq R$ for $1\leq i\leq p$, $P_i\cap P_j=\phi$ for any $1\leq i < j\leq p$, and $\cup_{i=1}^p P_i =R$, which satisfies the following properties:
 \begin{enumerate}
  \item  $|P_i| \leq \frac{2\beta}{{\eps}^d}$ for $1\leq i\leq p$; 
  \item $\sum_{R_j \in P_i} |{\Lp}\cap R_j|\geq \sum_{R_j \in P_i} |({\Op}\cup T)\cap R_j|$ for $1\leq i\leq p$.
%   \item $\sum_{R_j \in P_i} |{\Lp}\cap R_j|\leq \sum_{R_j \in P_i} |({\Op}\cup T)\cap R_j| + \frac{\beta}{{\eps}^{d}}$ for $1\leq i\leq p$.
 \end{enumerate}
\end{lemma}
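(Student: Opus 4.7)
The plan is to construct $\mathcal{P}$ by greedily pairing each ``deficit'' $R_j$ with just enough ``surplus'' $R_j$'s to achieve balance. Call $R_j$ \emph{bad} if $|(\Op\cup T)\cap R_j| > |\Lp\cap R_j|$ and \emph{good} otherwise; define its \emph{deficit} or \emph{surplus} as the positive difference. The first and most important step is to show that the total surplus of good sets is at least the total deficit of bad sets. Since the $\Lp_j$'s partition $\Lp$, the $\Op_j$'s partition $\Op$, and each point of $T$ lies in at most two of the sets $T_i\cup(\Z_i\cap B_i)$ (part~4 of Observation~\ref{obs:cardTandcardproxy}), we have
\[
\sum_{j=1}^I \bigl(|\Lp\cap R_j| - |(\Op\cup T)\cap R_j|\bigr) \;\geq\; |\Lp| - |\Op| - 2|T|.
\]
Substituting $|\Lp| = (1+5\eps)k$ (an invariant of Algorithm~\ref{alg:kmeans}), $|\Op|=k$, and $|T|\leq \eps(|\Lp|+|\Op|)/10$ from part~3 of Observation~\ref{obs:cardTandcardproxy}, this difference is $\Omega(\eps k)$, hence strictly positive for small enough $\eps$.

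With this surplus in hand, I would construct $\mathcal{P}$ as follows. First, make every good $R_j$ with surplus $0$ its own singleton part; this trivially satisfies property~2. Put every remaining good $R_j$ (each with surplus at least $1$) into a pool. Process the bad sets in arbitrary order: for a bad $R_j$ with deficit $d$, draw good sets one by one from the pool until the accumulated surplus reaches $d$, and form a new part consisting of $R_j$ together with these drawn sets. Any good sets still left in the pool at the end become their own singleton parts.

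Property~2 holds in every part by construction. For property~1, the deficit of a bad $R_j$ is at most $|R_j|\leq \beta/\eps^d$, and each pulled good set contributes a positive integer surplus, so at most $\beta/\eps^d$ good sets are pulled per bad set; each part therefore contains at most $1+\beta/\eps^d \leq 2\beta/\eps^d$ elements of $R$. The main obstacle is guaranteeing that the pool never runs dry while processing bad sets---and this is exactly what the global accounting above provides, where the $5\eps k$ slack built into $|\Lp|=(1+5\eps)k$ (relative to $|\Op|=k$) is what ultimately absorbs the total deficit and the small overcount arising from $T$.
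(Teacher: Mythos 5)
Your global accounting is correct and is the same as the paper's first step: writing $u(S)=|\Lp\cap S|-|(\Op\cup T)\cap S|$, one gets $\sum_j u(R_j)\geq |\Lp|-|\Op|-2|T|\geq 5\eps k-2\eps k=3\eps k$. The gap is in the claim that this surplus is ``exactly what'' keeps the pool from running dry. Your construction opens a fresh part for \emph{each} bad set and closes it as soon as the accumulated surplus of the drawn good sets reaches that single set's deficit. The last good set drawn can overshoot the deficit by up to $\beta/\eps^d-1$, and that overshoot is locked into the closed part and lost to every later bad set. There can be $\Theta(I)$ bad sets, so the total lost surplus can be $\Theta(I\cdot\beta/\eps^d)$, which for small $\eps$ dwarfs the available slack of $3\eps k$ (note $I$ itself can be $\Theta(\eps k)$, so the waste can be $\Theta(\beta k/\eps^{d-1})$). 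Concretely, set $m=\beta/\eps^d$ and $T=\emptyset$, and take roughly $B\approx 2k/m$ bad sets each containing $\lceil m/2\rceil$ optimal centers and $\lceil m/2\rceil-1$ local centers (size at most $m$, deficit $1$), together with $G\approx 5\eps k/m$ good sets each containing $m$ local centers (surplus $m$). This is consistent with $|\Op|=k$, $|\Lp|=(1+5\eps)k$, with $|R_j|\leq\beta/\eps^d$, and with $I=B+G\leq\eps(|\Lp|+|\Op|)/10$ (since $m\geq 10/\eps$), and indeed $u(R)\approx 5\eps k-2k/m>0$; yet your procedure hands one entire good set to each of the first $G$ bad sets and the pool is then empty while $B-G\gg 0$ bad sets remain unserved.

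The repair --- and what the paper's Algorithm 4 actually does --- is to amortize the overshoot over many sets per part: each part is grown to $\Theta(\beta/\eps^d)$ sets by alternately adding sets of positive and negative $u$-value so that the running sum stays in $[-\beta/\eps^d,\beta/\eps^d]$, and only then is it topped up to nonnegativity. A part then consumes at most $\beta/\eps^d$ units of the global surplus while retiring at least $\beta/\eps^d$ sets, so the total consumption over all $M\leq 2I/l+1$ parts is at most $I+\beta/\eps^d\leq 3\eps k$ and the invariant $u(R')\geq 0$ survives. Your scheme would need to be modified so that several bad sets can share (and soak up the excess surplus of) the same part; as written, the construction fails and the final sentence of your argument is not justified.
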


We note that ${\Lp}\cap R_j={\Lp}_j$ and $({\Op}\cup T)\cap R_j={\Op}_j\cup T_j \cup ({\Z}_j\cap B_j)$ for $1\leq j\leq I$. Before proving the lemma we use it to get an approximation bound on the quality of the local search solution. 

\begin{lemma}
 $cost({\Lp}) \leq (1+O(\eps)) cost({\Op})$.
\end{lemma}

\begin{proof}
Consider the collection $\mathcal{P}$ as mentioned in Lemma \ref{lem:cluster}. Also consider any element $J$ of $\mathcal{P}$. Let ${\Lp}_J=\cup_{R_i \in J} ({\Lp}\cap R_i)$, ${\Op}_J=\cup_{R_i \in J} ({\Op}\cap R_i)$, and $T_J = \cup_{R_i \in J} ({T}\cap R_i)$. Now consider the solution $S_J = ({\Lp}\setminus {\Lp}_J)\cup {\Op}_J\cup T_J$. By Lemma \ref{lem:cluster}, and using the fact that ${\Lp}\cap R_i$ and ${\Lp}\cap R_j$ are disjoint for $i\neq j$, $|{\Lp}_J|=\sum_{R_j \in J} |{\Lp}\cap R_j|\geq \sum_{R_j \in J} |({\Op}\cup T)\cap R_j| \geq |{\Op}_J\cup T_J|$. Thus by definition of $S_J$, $|S_J| \leq |{\Lp}|= (1+5\eps)k$. Now $J$ contains at most $\frac{2\beta}{{\eps}^d}$ sets and thus $|{\Lp}\setminus S_J|+|S_J\setminus {\Lp}|\leq |{\Lp}_J\cup {\Op}_J\cup T_J|\leq\frac{2\beta}{{\eps}^{d}}\frac{\beta}{{\eps}^{d}}\leq \frac{2{\beta}^2}{{\eps}^{2d}}$. By choosing the constant $c$ in Algorithm \ref{alg:kmeans} sufficiently large, one can ensure that $2{\beta}^2 \leq c$. Hence due to the local optimality condition in Algorithm \ref{alg:kmeans} it follows that, 
\begin{dmath}\label{in:costSJ}
cost(S_J)\geq (1-\frac{1}{n})cost({\Lp})
\end{dmath}
To argue about the cost of $S_J$ we use an assignment of the clients to the facilities in $S_J$. Consider a client $c$. There can be three cases: (i) $c$ is nearer to a facility of ${\Op}_J$ than the facilities in $({\Op}\setminus {\Op}_J)\cup ({\Lp}\setminus {\Lp}_J)$, that is, $c \in C_1 = C_{\Op}({\Op}_J)\cap (C_o\cup C_{\Lp}({\Lp}_J))$, (ii) $c$ is not in $C_{\Op}({\Op}_J)$ and $c$ is nearer to a facility of ${\Lp}_J$ than the facilities in ${\Lp}\setminus {\Lp}_J$, that is, $c \in C_2 = C_{\Lp}({\Lp}_J) \setminus C_{\Op}({\Op}_J)$, (iii) $c$ does not appear in case (i) and (ii), that is, $c \in C_3 = C \setminus (C_1\cup C_2)$. Note that if $c\in C_{\Lp}({\Lp}_J)$, $c$ must appear in case (i) or (ii). Thus if $c$ is corresponding to case (iii), its nearest neighbor in ${\Lp}$ must be in ${\Lp}\setminus {\Lp}_J$. Also note that for a client $c \in C_2$, it should be the case that $c({\Lp}) \in {\Lp}_i\subseteq {\Lp}_J$ and $c({\Op}) \in {\Op}_j \subseteq ({\Op}\setminus {\Op}_J)$ for some $1\leq i\neq j\leq I$. Thus Lemma \ref{lem:gnotinballi} is applicable for $c$, and $g(c) \in T_i \cup ({\Z}_i\cap B_i) \subseteq T_J$.
% Let $R_i = C \setminus ((C_{\Op}({\Op}_i)\cap (C_o\cup C_{\Lp}({\Lp}_i))) \cup (C_{\Lp}({\Lp}_i) \setminus C_{\Op}({\Op}_i)))$.
% is nearer to a facility of ${\Lp}\setminus {\Lp}_i$ than the facilities in ${\Lp}_i \cup {\Op}_i$, that is $c\in C_{{\Lp} \cup {\Op}_i}({\Lp}\setminus {\Lp}_i)$.
 
Now we describe the assignment. For a client in $C_1$, assign it to a facility in ${\Op}_J$ nearest to it. For a client $c \in C_2$, use the assignment $g$ in Lemma \ref{lem:gnotinballi}. For a client in $C_3$, assign it to a facility in ${\Lp}\setminus {\Lp}_J$ nearest to it. Thus by Inequality \ref{in:costSJ},

\begin{align*}
 \sum_{c\in C_1} c_O+\sum_{c\in C_2} {||c-g(c)||}^2+ \sum_{c\in C_3} c_L \geq (1-\frac{1}{n})\sum_{c\in C} c_L
\end{align*}
\begin{align*}
 \Rightarrow \sum_{c\in C_1} (c_O-c_L)+\sum_{c\in (C_2\cap C_o)} (c_L-c_L)+ \sum_{c\in (C_2\cap C_l)} (c_O-c_L) \geq -\frac{1}{n}cost({\Lp})
\end{align*}
\begin{align*}
 \Rightarrow \sum_{c\in C_1} (c_O-c_L)+ \sum_{c\in (C_2\cap C_l)} (c_O-c_L) \geq -\frac{1}{n}cost({\Lp})
\end{align*}
\begin{align*}
 \Rightarrow \sum_{c\in C_{{\Op}\cup {\Lp}} ({\Op}_J\cup {\Lp}_J)} (c_O-c_L) \geq -\frac{1}{n}cost({\Lp})
\end{align*}

The last inequality follows by noting that $C_1 \cup (C_2\cap C_l) = C_{{\Op}\cup {\Lp}} ({\Op}_J\cup {\Lp}_J)$. Also by Observation \ref{obs:cardTandcardproxy}, it follows that $|\mathcal{P}|\leq I=O({\eps}(|{\Lp}|+ |{\Op}|))=O({\eps}k)$. Thus by summing over all $J \in \mathcal{P}$ we get,

\begin{align*}
\sum_{J\in \mathcal{P}} \sum_{c\in C_{\Op\cup \Lp}({\Op}_J\cup {\Lp}_J)} (c_O-c_L) \geq -O(\eps) cost({\Lp})
\end{align*}
\begin{align*}
 \Rightarrow \sum_{c\in C} (c_O-c_L) \geq -O(\eps) cost({\Lp})
\end{align*}
\begin{align*}
 \Rightarrow cost({\Lp}) \leq (1+O(\eps)) cost({\Op})
\end{align*}
\end{proof}

As mentioned before for fixed $d$, the running time of Algorithm \ref{alg:local} is polynomial and hence we have established the following theorem.
\begin{theorem}\label{th:kmeans}
 There is a polynomial time local search algorithm for $k$-means that uses $(1+O(\eps))k$ facilities and returns a solution with cost at most $(1+O(\eps))$ times the cost of the optimal $k$-means solution.
\end{theorem}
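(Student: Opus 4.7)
The plan is to assemble Theorem~\ref{th:kmeans} from the pieces already in place, leaving only polynomial running time to verify. Algorithm~\ref{alg:kmeans} pads $K$ to exactly $(1+5\eps)k$ centers and never permits $|K_1|$ to exceed this bound, so the returned $\Lp$ uses $(1+O(\eps))k$ centers. The preceding lemma shows $\cost(\Lp)\leq (1+O(\eps))\cost(\Op)$ for an optimal $k$-means solution $\Op$ of size exactly $k$, which together with the center count yields the claimed bi-criteria approximation.

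To bound the number of while-loop iterations I would use a standard potential argument. The initial solution from \cite{KanungoMNPSW04} has cost at most $(9+\eps)\cdot\opt$, and each accepted swap shrinks the current cost by a factor of at least $1-1/n$. Combined with a polynomial lower bound on $\opt$ (when $\opt>0$) for inputs of polynomial bit complexity, this bounds the number of iterations by $\mathrm{poly}(n)$.

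For the work inside one iteration, one must decide whether there exists a swap $(A,B)$ with $A\subseteq K$, $B\subseteq\mathcal{R}^d\setminus K$, $|A|+|B|\leq c/\eps^{2d}$, $|B|\leq|A|$ (which guarantees $|K_1|\leq(1+5\eps)k$), and $\cost((K\setminus A)\cup B)<(1-1/n)\cost(K)$. The set $A$ can be chosen in at most $|K|^{O(1/\eps^{2d})}=\mathrm{poly}(n)$ ways. The main obstacle is enumerating $B$, whose points may lie anywhere in $\mathcal{R}^d$; here I would invoke the standard technique of \cite{InabaKI94} already alluded to in the SOS-FL analysis. Since the optimal center of any $k$-means cluster is its centroid, for fixed $d$ one can build a polynomial-size collection of candidate centers (centroids of small subsets of $P$) such that restricting $B$ to $O(1/\eps^{2d})$-tuples drawn from this collection forfeits at most a $(1+\eps)$ factor in cost. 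Trying all such tuples gives a polynomial-time per-iteration search and hence polynomial overall running time, which together with the two guarantees above completes the proof.
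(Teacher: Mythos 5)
Your proposal is correct and follows essentially the same route as the paper, which proves the approximation bound in the preceding lemma and then dismisses the running time with a one-line appeal to the ``standard argument like in \cite{InabaKI94}''; you simply make explicit the three standard ingredients (the $(1+5\eps)k$ invariant of Algorithm~\ref{alg:kmeans}, the $(1-1/n)$ geometric cost decrease from the $(9+\eps)$-approximate start, and the candidate-centroid discretization of the swapped-in set $B$). The only point worth keeping in mind is that the comparison solutions $S_J$ in the analysis use arbitrary points of $\Op_J\cup T_J$, so the local-optimality inequality must be transferred through the discretization; your remark that the candidate set forfeits at most a $(1+\eps)$ factor is exactly the patch needed, and it only perturbs the constants hidden in $O(\eps)$.
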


Next we complete the proof of Lemma \ref{lem:cluster}.
\subsubsection{Proof of Lemma \ref{lem:cluster}}
For simpicity of exposition we further define some notations. For a set $r \subseteq {\Lp} \cup {\Op}\cup T$, let $u(r) = |{\Lp}\cap r|-|({\Op} \cup T)\cap r|$. For a collection $\Psi$ of sets, let $u(\Psi) = \sum_{r \in \Psi} |{\Lp}\cap r|-|({\Op} \cup T)\cap r|$. Using these notations we rewrite the statement of Lemma \ref{lem:cluster} as following. 
\begin{lemma}
 Consider the collection $R = \{R_1,\ldots,R_I\}$ of sets with $R_j = {\Lp}_j\cup {\Op}_j\cup T_j \cup ({\Z}_j\cap B_j)$ and $|R_j| \leq \frac{\beta}{{\eps}^d}$ for $1\leq j\leq I$, where $\beta$ is the constant in Observation \ref{obs:cardTandcardproxy}. There exists a collection $\mathcal{P}=\{{P}_1,\ldots, {P}_p\}$, with $P_i \subseteq R$ for $1\leq i\leq p$, $P_i\cap P_j=\phi$ for any $1\leq i < j\leq p$, and $\cup_{i=1}^p P_i =R$, which satisfies the following properties:
 \begin{enumerate}
  \item  $|P_i| \leq \frac{2\beta}{{\eps}^d}$ for $1\leq i\leq p$, where $\beta$ is the constant in Observation \ref{obs:cardTandcardproxy}; 
  \item $u(P_i)\geq 0$ for $1\leq i\leq p$.
%   \item $u(P_i)\leq \frac{\beta}{{\eps}^{d}}$ for $1\leq i\leq p$.
 \end{enumerate} 
\end{lemma}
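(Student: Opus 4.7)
My plan is to prove this ``balanced clustering'' statement (strongly reminiscent of the lemma used by Cohen-Addad and Mathieu~\cite{Cohen-AddadM15}) by a greedy pairing argument that exploits two facts: the per-set size bound $|R_j| \le \beta/\eps^d =: M$ from Observation~\ref{obs:cardTandcardproxy}(1), whence $|u(R_j)| \le M$, and a strictly positive global surplus $u(R) = \sum_{j=1}^I u(R_j) > 0$.

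The global surplus is direct. Algorithm~\ref{alg:kmeans} terminates with $|{\mathcal L}| = (1+5\eps)k$, while $|{\mathcal O}| = k$. Since the ${\mathcal L}_j$'s (resp.\ ${\mathcal O}_j$'s) partition ${\mathcal L}$ (resp.\ ${\mathcal O}$), we have $\sum_j|{\mathcal L}\cap R_j|=|{\mathcal L}|$ and $\sum_j|{\mathcal O}\cap R_j|=|{\mathcal O}|$, and by Observation~\ref{obs:cardTandcardproxy}(4), $\sum_j |T_j\cup({\mathcal Z}_j\cap B_j)| \le \eps(|{\mathcal L}|+|{\mathcal O}|)/5 \le \eps k$ for small $\eps$. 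Combining (and assuming ${\mathcal O}$ and $T$ disjoint, which holds by construction) gives $u(R) \;\ge\; (1+5\eps)k - k - \eps k \;=\; 4\eps k \;>\; 0$.

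For the construction, I would sort the $R_j$'s in decreasing order of $u(R_j)$, calling the positive-weight sets \emph{anchors} with surpluses $p_1 \ge \cdots \ge p_a > 0$ and the strictly-negative-weight sets \emph{demands} with magnitudes $n_1 \ge \cdots \ge n_b > 0$ (zero-weight sets can be placed into any group harmlessly). Process anchors in order: for each $p_i$, open a new group $P_i$ containing the anchor itself, then greedily attach the next yet-unassigned demand so long as the running $u$-sum of $P_i$ stays non-negative; close $P_i$ and move on when no further demand fits. Each resulting group has $u(P_i) \ge 0$ by construction, and since every demand has magnitude at least $1$ while the anchor contributes at most $M$ of absorbing capacity, at most $M$ demands fit per group, so $|P_i| \le M + 1 \le 2\beta/\eps^d$.

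The main obstacle is ensuring that \emph{every} demand gets served: naive first-fit wastes up to $M$ per anchor, and $aM$ total waste could in principle exceed the $\Omega(\eps k)$ global surplus. I plan to handle this via a batching/flow argument, viewing the anchors as supply nodes of capacity $u(R_j)$ and the demands as integer sinks of size $|u(R_j)|$. Because total supply exceeds total demand by $u(R) \ge 4\eps k$ and each individual weight is an integer bounded by $M$, an integer allocation exists in which every demand is fully covered and each supply serves at most $M$ demands. When a demand fails to fit in the current anchor's group, I would merge the next anchor's group with the current one; since the combined surplus eventually exceeds any single demand ($\le M$), the greedy completes. A careful accounting shows group cardinality remains $\le 2M$, yielding the desired collection $\mathcal{P}$.
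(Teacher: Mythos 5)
Your setup is right: you correctly isolate the two quantitative facts that drive the lemma (the global surplus $u(R)\ge \Omega(\eps k)$ coming from $|{\Lp}|=(1+5\eps)k$ versus $|{\Op}|=k$ and Observation 4, and the per-set bound $|u(R_j)|\le M:=\beta/\eps^d$), and you correctly diagnose the real difficulty, namely that a one-anchor-per-group greedy can waste up to $M$ units of surplus per group across up to $I$ groups, which can vastly exceed the available surplus. But the resolution you offer for that difficulty is not a proof. The ``merge the next anchor when a demand does not fit'' rule, as described, does not keep groups at size $\le 2M$: take demands of magnitude $1$ and anchors of surplus $M$; a group absorbs $M$ demands, the $(M{+}1)$-st fails, you merge a second anchor, absorb $M$ more demands, and the group already has $2M+2$ sets. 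If instead you impose a hard cap of $2M$ sets per group, you must then show that the deferred demands can still be served later, which requires exactly the accounting you defer to ``a careful accounting'' / ``a batching/flow argument'': bounding the number of groups, bounding the surplus stranded in each closed group, and comparing the total stranded surplus to $u(R)$. That accounting is the heart of the lemma and is missing.

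For comparison, the paper's construction is organized so that this accounting becomes easy: each group is built by first taking exactly $l/2=M$ sets, alternately adding a negative-$u$ set when the running sum is nonnegative and a positive-$u$ set when it is negative, which keeps the running sum in $[-M,M]$; it is then padded with at most $M$ further positive sets (each contributing at least $1$) until the sum is nonnegative, and one stops immediately, so the group's final $u$-value is at most $M$. Because every group except the last contains at least $M$ sets, there are at most $\tfrac{I}{M}+1$ groups, so the total surplus consumed is at most $I+M$, and the loop invariant $u(R')\ge 0$ follows from $u(R)\ge 3\eps k\ge I+M$. Your sketch never forces groups to be large (so the group count is not controlled) and never bounds the per-group leftover in a way that closes against $u(R)$; until you supply a concrete grouping rule with both of these properties, the argument has a genuine gap.
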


\begin{proof}
 Note that for each $j$, $u(R_j) \in [-\frac{\beta}{{\eps}^d},\frac{\beta}{{\eps}^d}]$, as $|R_j|\leq\frac{\beta}{{\eps}^d}$. Now by Observation \ref{obs:cardTandcardproxy}, $\sum_{j=1}^I |T_j \cup ({\Z}_j\cap B_j)|\leq \eps(|{\Lp}|+|{\Op}|)/5\leq \eps((1+5\eps)k+k)/5\leq 2\eps k$. Thus,
 
 \begin{align*}
 u(R) = |{\Lp}|-|{\Op}|-\sum_{j=1}^I |T_j \cup ({\Z}_j\cap B_j)| \geq (1+5\eps)k-k-2\eps k \geq 3\eps k.
 \end{align*}
 Now we show the construction of the collection $\mathcal{P}$. For any $j$, if $u(R_j)$ equals $0$, we add $\{R_j\}$ to $\mathcal{P}$ as an element. Note that such an element satisfies the desired properties. Now consider all the sets $R_j \in R$ such that $|u(R_j)| \geq 1$. Denote by $R'$ the collection of such sets. Note that $u(R')=u(R)$. We process $R'$ using the following construction.
 
 The construction is shown as Algorithm \ref{alg:cluster}. In each iteration of the outer while loop in line 2, we remove at most $l = \frac{2\beta}{{\eps}^d}$ sets from $R'$ and add the collection $\Psi$ of these sets to $\mathcal{P}$ as an element. These $l$ sets are chosen carefully so that $u(\Psi)$ is non-negative. To ensure this, at first $l/2$ sets are chosen to get the collection $\Psi'$ such that $-l/2 \leq u(\Psi') \leq l/2$. Then we add at most $l/2$ more sets $\{r\}$ with $u(r) > 0$ (while loop in lines $17$-$19$) to obtain a collection $\Psi$ with $u(\Psi) \geq 0$. Assuming the loop invariant $u(R') \geq 0$ at the beginning of each iteration of the outer while loop, such $r$ must exist. Later we will argue that this loop invariant holds. This ensures that the algorithm exits the while loop in lines $17$-$19$ with a $\Psi$ such that $u(\Psi) \geq 0$. Also we stop the addition of sets as soon as $u(\Psi)$ becomes non-negative. This ensures that $u(\Psi) \leq \frac{\beta}{{\eps}^d}$. Note that $|\Psi| \leq l = \frac{2\beta}{{\eps}^d}$. Thus $\Psi$ satisfies all the desired properties. 
 \begin{algorithm}[]
  \caption{}
  \label{alg:cluster}
  \begin{algorithmic}[1]
%         \REQUIRE A set of points $P \subset \mathcal{R}^d$, an integer $k$, a constant $\eps > 0$.
%         \ENSURE A set of cluster centers.    
    \STATE $l = \frac{2\beta}{{\eps}^d}$
    \WHILE {$|R'| > l$}
      \STATE $\Psi'\leftarrow \{r\}$, where $r$ is any element in $R'$ with $u(r)>0$; $R'\leftarrow R'\setminus \{r\}$
      \FOR {$i= 1$ to $l/2$ }
	\IF {$u(\Psi') \geq 0$}
	  \IF {$u(r) > 0$ for each $r \in R'$}
	    \STATE Add $\Psi'$ to $\mathcal{P}$
	    \FOR {\textbf{each} $r \in R'$}
	      \STATE Add $\{r\}$ to $\mathcal{P}$; $R'\leftarrow R'\setminus \{r\}$
	    \ENDFOR
	    \RETURN
	  \ENDIF
	  \STATE $r\leftarrow$ any element in $R'$ with $u(r)<0$
	  \STATE $\Psi' \leftarrow \Psi'\cup \{r\}$; $R'\leftarrow R'\setminus \{r\}$
	\ELSIF {$u(\Psi') < 0$} 
	  \STATE $r\leftarrow$ any element in $R'$ with $u(r)>0$
	  \STATE $\Psi' \leftarrow \Psi'\cup \{r\}$; $R'\leftarrow R'\setminus \{r\}$
	\ENDIF	
      \ENDFOR
      \STATE $\Psi \leftarrow \Psi'$	
      \WHILE {$u(\Psi) < 0$}
	\STATE $r\leftarrow$ any element in $R'$ with $u(r)>0$
	\STATE $\Psi \leftarrow \Psi\cup \{r\}$; $R'\leftarrow R'\setminus \{r\}$
      \ENDWHILE
      \STATE Add $\Psi$ to $\mathcal{P}$
    \ENDWHILE
    \STATE Add $R'$ to $\mathcal{P}$
  \end{algorithmic}
\end{algorithm}

 Now consider the selection of the $l/2$ sets of $\Psi'$. We select these sets sequentially, one in each iteration of the for loop in lines 4-15. Consider a particular iteration of this for loop. There can be two cases: (i) $u(\Psi') \geq 0$, and (ii) $u(\Psi') < 0$. In case (i) if there is a set $r$ with $u(r) < 0$, we choose it. If there is no such set $r$, we add $\Psi'$ to $\mathcal{P}$ and for any set $r \in R'$, $\{r\}$ is added to $\mathcal{P}$ as an element. The algorithm terminates. In case (ii) we choose a set $r$ with $u(r) > 0$. Assuming the loop invariant $u(R') \geq 0$ at the beginning of each iteration of the outer while loop, such an $r$ must exist. Hence in both cases we can ensure that at the end of each iteration of the for loop in lines $4$-$15$ $u(\Psi')\in [-\frac{\beta}{{\eps}^d},\frac{\beta}{{\eps}^d}]$. 
 
 Let $M$ denote the number of iterations of the outer while loop. Then in each step except the last, we remove at least $l/2$ sets from $R'$. Since $R'$ has at most $I$ sets initially, $(M-1)\frac{l}{2} \leq I \Rightarrow M \leq \frac{2I}{l}+1$.
 
 Now we argue that after iteration $0\leq j\leq M$, 
 
\begin{dmath}\label{in:cluster} 
u(R') \geq (\frac{2I}{l}+1)\frac{l}{2}-j\frac{l}{2}
\end{dmath}

Since $(\frac{2I}{l}+1)\frac{l}{2}-j\frac{l}{2}\geq (\frac{2I}{l}+1)\frac{l}{2}-M\frac{l}{2} \geq 0$, this would imply $u(R')\geq 0$ after iteration $j\leq M$. This establishes the loop invariant and also shows that the set $R'$ added to $\mathcal{P}$ in line 21 has $u(R')\geq 0$, completing the proof of the lemma. 

We now show ($\ref{in:cluster}$) by induction. The inequality is true for $j=0$, since before iteration 1, 
\begin{align*}
u(R')=u(R)\geq 3\eps k\geq I+\frac{\beta}{{\eps}^d} = (\frac{2I}{l}+1)\frac{l}{2}
\end{align*}

Consider a $j$ such that $1\leq j \leq M$ and suppose ($\ref{in:cluster}$) is true after iteration $j-1$. Then at the beginning of iteration $j$, we have $u(R') \geq (\frac{2I}{l}+1)\frac{l}{2}-(j-1)\frac{l}{2}$. If the condition in line 6 is true in iteration $j$, then 
% clearly $u(R')\geq 0$ at that point since each remaining set $r$ in $R'$ has $u(r) > 0$. As 
the algorithm terminates.
% , $j$ is equal to $M$. 
Since $R'$ becomes empty after this iteration, ($\ref{in:cluster}$) trivially holds. If in iteration $j$ we add $\Psi$ to $\mathcal{P}$ in line 20, then $u(\Psi) \leq \frac{\beta}{{\eps}^d} = \frac{l}{2}$. Thus, after iteration $j$,

\begin{align*}
u(R')\geq (\frac{2I}{l}+1)\frac{l}{2}-(j-1)\frac{l}{2}-\frac{l}{2} = (\frac{2I}{l}+1)\frac{l}{2}-j\frac{l}{2}. 
\end{align*}
\end{proof}
\bibliographystyle{plain}
\bibliography{facility}

\appendix

\end{document}